\theoremstyle{plain}
\newtheorem{theorem}{Theorem}[section]
\newtheorem{lemma}[theorem]{Lemma}
\newtheorem{proposition}[theorem]{Proposition}
\theoremstyle{definition}
\newtheorem{problem}[theorem]{Problem}
\newtheorem{fact}[theorem]{Fact}
\algrenewcommand{\Require}{\item[\textit{Input:}]}
\algrenewcommand{\Ensure}{\item[\textit{Output:}]}
\algrenewcommand{\alglinenumber}[1]{#1.}
\newcommand{\quat}{\mathcal{B}_{p, \infty}}
\newcommand{\sheaf}[1]{\mathscr{#1}}
\newcommand{\bolt}[1]{\text{\Lightning}_{#1}}
\newcommand{\Romnum}[1]{\uppercase\expandafter{\romannumeral #1}}
\DeclareMathOperator{\trd}{trd}
\DeclareMathOperator{\tr}{Tr}
\DeclareMathOperator{\cl}{Cl} 
\DeclareMathOperator{\nrd}{nrd} 
\DeclareMathOperator{\poly}{poly}
\DeclareMathOperator{\spec}{Spec}
\let\hom\relax
\DeclareMathOperator{\hom}{Hom}
\DeclarePairedDelimiter{\abs}{\lvert}{\rvert}
\let\ket\relax
\DeclarePairedDelimiter{\ket}{\lvert}{\rangle}
\let\bra\relax
\DeclarePairedDelimiter{\bra}{\langle}{\rvert}
\let\braket\relax
\DeclarePairedDelimiterX{\braket}[2]{\langle}{\rangle}{#1 \delimsize\vert #2}
\DeclarePairedDelimiter{\lrang}{\langle}{\rangle}
\DeclarePairedDelimiter{\opnorm}{\lVert}{\rVert}
\def\A{\mathcal{A}}
\def\Q{\mathbb{Q}}
\def\C{\mathbb{C}}
\def\R{\mathbb{R}}
\def\Z{\mathbb{Z}}
\def\F{\mathbb{F}}
\def\P{\mathbb{P}}
\def\H{\mathbb{H}}
\def\T{\mathbb{T}}
\def\bank{\mathsf{Bank}}
\def\ver{\mathsf{Ver}}
\def\gen{\mathsf{Gen}}
\def\X{\mathcal{X}}
\def\O{\mathcal{O}}
\title{Cryptanalysis of Three Quantum Money Schemes}
\author{
Andriyan Bilyk\thanks{Ryerson University, {\tt andriyan.bilyk@ryerson.ca}}
\and
Javad Doliskani\thanks{Ryerson University, {\tt javad.doliskani@ryerson.ca}}
\and
Zhiyong Gong\thanks{Ryerson University, {\tt zhiyong.gong@ryerson.ca}}
}
\date{}
\begin{document}
\maketitle

\begin{abstract}
    We investigate the security assumptions behind three public-key quantum money schemes. Aaronson and Christiano proposed a scheme based on hidden subspaces of the vector space $\mathbb{F}_2^n$ in 2012. It was conjectured by Pena et al in 2015 that the hard problem underlying the scheme can be solved in quasi-polynomial time. We confirm this conjecture by giving a polynomial time quantum algorithm for the underlying problem. Our algorithm is based on computing the Zariski tangent space of a random point in the hidden subspace.
    
    Zhandry proposed a scheme based on multivariate hash functions in 2017. We give a polynomial time quantum algorithm for cloning a money state with high probability. Our algorithm uses the verification circuit of the scheme to produce a banknote from a given serial number.
    
    Kane, Sharif and Silverberg proposed a scheme based on quaternion algebras in 2021. The underlying hard problem in their scheme is cloning a quantum state that represents an eigenvector of a set of Hecke operators. We give a polynomial time quantum reduction from this hard problem to a linear algebra problem. The latter problem is much easier to understand, and we hope that our reduction opens new avenues to future cryptanalyses of this scheme.
\end{abstract}

\newpage

\section{Introduction}

The first quantum money scheme was proposed by Wiesner \cite{wiesner1983conjugate} around 1970. The idea was to represent banknotes using quantum states which, by the laws of quantum mechanics, would be impossible to copy. Wiesner's scheme, which is nowadays called a private-key quantum money scheme, worked as follows. The bank generates a pair $(s, \rho_s)$ consisting of a classical serial number $s$ and a quantum state $\rho_s$, and stores $s$ along with a classical description of $\rho_s$ in a database. Without access to the database, no algorithm would be able to copy $\rho_s$ with non-negligible probability. Therefore, only the issuing bank would be able to verify the banknotes. Although this scheme is information-theoretically secure, it suffers major drawbacks in terms of practicality: First, the issuing bank has to maintain a huge database containing classical descriptions of every banknote ever produced. Second, only the bank can verify a banknote, and therefore the users have to take the money back to the bank for every transaction.

Subsequent efforts to address the drawbacks of Wiesner's scheme led to the idea of \textit{public-key quantum money}. The first formal treatment of public-key quantum money was given by Aaronson \cite{aaronson2009quantum}. Intuitively, in such a scheme, a banknote can be verified by everyone, using a publicly known algorithm, but cannot be copied by anyone. The first explicit construction of public-key quantum money, which was based on stabilizer states, was given in \cite{aaronson2009quantum}. The construction was later proved to be insecure by Lutomirski et al. \cite{lutomirski2009breaking}. 

Public-key quantum money cannot be information-theoretically secure, it has to be based on a computational assumption. There have been many attempts at building a secure public-key quantum money scheme. Farhi et al. \cite{farhi2012quantum} proposed a scheme based on knot theory. A banknote in their scheme consists of a quantum state $\rho_s$ representing a superposition of certain grid diagrams, and serial number $s$ which is the Alexander polynomial of the grids in $\rho_s$. A banknote is verified using a procedure based on a classical Markov chain. 

Aaronson and Christiano \cite{aaronson2012quantum} proposed a scheme based on hidden subspaces of the vector space $\F_2^n$. In their scheme, a banknote consists of a quantum state $\rho_s$ that is a uniform superposition of elements in a random linear subspace $A_s \subset \F_2^n$ of dimension $n / 2$, and a serial number $s$ that provides membership oracles for $A_s$ and the orthogonal complement $A_s^\perp$. The verification of a banknote is done using a projection operator built from the membership oracles for $A_s$ and $A_s^\perp$.

Zhandry \cite{zhandry2019quantum} adapted the idea proposed in \cite{lutomirski2009breaking} to give a construction of public-key quantum money based on multivariate hash functions. In their scheme, even the issuing bank cannot reproduce the same banknote. A bank note in their scheme consists of a quantum state that is a uniform superposition of the preimages of a random output $y$ of a hash function, and a serial number which is the point $y$. To be able to verify banknotes, a restricted class of hash functions are used. The verification is done using the quantum Fourier transform and computing the ranks of matrices in a certain superposition.

Kane, Sharif and Silverberg \cite{kane2021quantum} proposed a completely different construction for public-key quantum money based on quaternion algebras. There exists a set of commuting operators, called Hecke operators, on the space of modular forms. The Hecke operators are Hermitian, and therefore the space of modular forms has a basis consisting of simultaneous eigenvectors for these operators. A banknote in this scheme consists of a quantum state $\rho_s$ that is an eigenstate of the Hecke operators, and a serial number $s$ that is a set of eigenvalues corresponding to $\rho_s$ and a certain set of Hecke operators. The verification of a banknote is done using phase estimation.

\subsection{Previous cryptanalysis}

Unfortunately, the security of all proposed public-key quantum money schemes to this date have been based on ad hoc/nonstandard computational assumptions. However, there has been a limited number of attempts, many of which were not convincing, at cryptanalyzing these assumption. As mentioned above, the public-key quantum money scheme of \cite{aaronson2009quantum} was broken by Lutomirski et al. \cite{lutomirski2009breaking} who presented two attacks that worked in different parameter regimes. Their first attack was classical and used only the description of the verification circuit, while their second attack was a quantum algorithm that could generate banknotes that were different from the intended banknotes but could pass verification with non-negligible probability. 

Pena et al. \cite{pena2015algebraic} presented an attack on the scheme \cite{aaronson2012quantum}, but their attack was not on the original parameters. In particular, the explicit construction proposed in \cite{aaronson2012quantum} is defined over $\F_2$, but the attack presented in \cite{pena2015algebraic} is only efficient if the scheme was defined over $\F_p$ for a larger prime $p$. Pena et al. conjectured that for $p = 2$ their algorithm would run in quasi-polynomial time. Their attack is entirely classical and uses only the serial number given in the banknote.

Roberts \cite{roberts2021security} proposed an attack on the assumption underlying the scheme of \cite{zhandry2019quantum} for a certain parameter regime. However, as discussed in a later version of \cite{zhandry2019quantum}, the attack does not make the system insecure. Moreover, the hardness assumption behind the scheme could be modified in a way that renders the attack useless.

\subsection{This work}

In this paper, we carry out cryptanalysis on the following three schemes:

\begin{itemize}[leftmargin = *]
    \item Aaronson and Christiano \cite{aaronson2012quantum}. As mentioned above, this scheme is based on hidden subspaces of the vector space $\F_2^n$. The hard problem behind this scheme is to recover a secret linear subspace of $\F_2^n$ from the the set of common roots of a set of random polynomials over $\F_2$. The previous attack, carried out by Pena et al. \cite{pena2015algebraic}, against this scheme uses the Gr\"{o}bner basis algorithm. Their attack is only effective if the scheme was defined over $\F_p$ for larger $p$, instead of $\F_2$. It was conjectured in \cite{pena2015algebraic} that there exists a quasi-polynomial time algorithm for the original scheme over $\F_2$. 
    
    We give a polynomial time quantum algorithm for the case $\F_2$. Our attack is geometric, it uses the fact that a linear variety is isomorphic to its Zariski tangent space at any point. In particular, the subspace hidden by a set of polynomials can be viewed as a linear variety $Y$ included in an affine variety $X$. We show that one can recover the hidden subspace by simply computing the kernel of the Jacobian matrix of $X$ at a random point of the subspace $Y$. Such a random point is obtained by measuring a money state.
    
    \item Zhandry \cite{zhandry2019quantum}. This scheme is based on Multivariate Hash Functions. There are different versions of the paper \cite{zhandry2019quantum} that give slightly different formulations of the scheme, but the underlying hardness assumptions can all be stated using the idea of multi-collision of multivariate polynomial hash functions. We will focus on the version \cite{zhandry2017qlightning}, in which the hardness assumption is directly stated using quadratic forms.
    
    We give a polynomial time quantum algorithm, using which one can obtain arbitrarily many copies of a given banknote. Our algorithm uses the verification circuit of the scheme to produce a certain superposition of ``bolts''. A measurement on this superposition then produces a copy of the money state with high probability. The ability to produce arbitrarily many copies of the money state comes from the fact the measurement outcome $y'$ can be compared against the given serial number $y$. If $y' = y$ then desired money state has been produced, otherwise the whole process is repeated.  
    
    \item Kane, Sharif and Silverberg \cite{kane2021quantum}. This scheme is based on quaternion algebras. There have not been any attempts at cryptanalyzing this scheme yet, partly because of the mathematics involved which is rather nontrivial. The underlying hard problem in this scheme is simply cloning the money state which is an eigenstate of the set of all Hecke operators. 
    
    Although we have not been able to break this scheme, we give a polynomial time quantum reduction from the underlying hard problem to the problem of inverting a certain matrix $A$. Our algorithm is based on a deep connection between ideal classes in a rational quaternion algebra and the space of modular forms of weight $2$ over the complex numbers. The connection is provided by the theory of Theta series. The matrix $A$ is highly structured, with entries the representation numbers of explicit quaternary quadratic forms. In other words, $A$ can be described by a set of polynomials of degree two in four variables. To work with this matrix, one does not need to know much about the theory behind the scheme. We believe that this reduction is an important first step toward an effective cryptanalysis of this scheme, and opens interesting new avenues for future attacks.
\end{itemize}


\section{Preliminaries}

\paragraph{Quantum computation.}
For a detailed treatment on quantum information we refer the reader to \cite{watrous2018theory}. A register in this paper can contain a classical or a quantum state. The classical state of a register $\mathsf{X}$ is described by a finite alphabet $\Sigma$. An example alphabet used in this paper is $\Sigma = \F_2^n$. The quantum state of $\mathsf{X}$ is described by the set of density operators $\mathrm{D}(\X)$ where $\X$ is the complex Euclidean space $\C^\Sigma$.

We will also represent quantum states using unit vectors in $\X$, which are called pure states. We will use the Dirac notation when describing unit vectors. In particular, a unit column vector $x \in \X$ is written as $\ket{x}$, and the row vector $x^*$, which is the conjugate transpose of $x$, is written as $\bra{x}$. The quantum state of the register $\mathsf{X}$ can then be written as
\[ \sum_{x \in \Sigma} \alpha_x\ket{x}, \quad \sum_{x \in \Sigma} \abs{\alpha_x}^2 = 1. \]
The trace distance between two quantum states $\rho$ and $\sigma$ is defined by $\opnorm{\rho - \sigma}_1$ where $\opnorm{A}_1 = \tr \sqrt{A^*A}$ is the trace norm of linear operators. For pure states $\ket{\psi}$ and $\ket{\phi}$ the distance is computed using
\[ \opnorm{\ket{\psi}\bra{\psi} - \ket{\phi}\bra{\phi}}_1 = 2 \sqrt{1 - \abs{\braket{\psi}{\phi}}^2}. \]
For quantum state $\rho_i$ and $\sigma_i$, $i = 1, 2$
\[ \opnorm{\rho_1 \otimes \rho_2 - \sigma_1 \otimes \sigma_2}_1 \le \opnorm{\rho_1 - \sigma_1}_1 + \opnorm{\rho_2 - \sigma_2}_1. \]

\paragraph{Public-key quantum money.}
Following the definition in \cite{aaronson2012quantum}, a public-key quantum money scheme consists of three algorithms:
\begin{itemize}[font = \normalfont]
    \item $\gen$. Takes as input a security parameter $\kappa$, and generates a key pair $(k_{\text{pri}}, k_{\text{pub}})$ in probabilistic polynomial time (in $\kappa$). Here, $k_{\text{pri}}$ is the private key and $k_{\text{pub}}$ is the public key.
    \item $\bank$. Takes as input a private key $k_{\text{pri}}$, and generates a banknote in probabilistic polynomial time.
    \item $\ver$. Takes as input the public key $k_{\text{pub}}$ and a banknote, and outputs either `accept' or `reject'.
\end{itemize}
A banknote, denoted by $\ket{\$}$, usually consists of a pair $(s, \rho_s)$ where $s$ is a binary string and $\rho_s$ is a mixed quantum state. The string $s$ is called the \textit{serial number}, and the state $\rho_s$ is called the \textit{money state}. 

The notion of security for a quantum money scheme is best explained using an adversary-challenger game. The goal for an adversary $A$ is to produce a copy of a valid banknote without having access to the private key. Consider the following game played by $A$ and a challenger. The challenger calls $\bank$ to generate a banknote $\ket{\$} = (s, \rho_s)$, and sends the banknote to $A$. $A$ generates two alleged banknotes $\ket{\$_1} = (s_1, \rho_{s_1})$ and $\ket{\$_2} = (s_2, \rho_{s_2})$ where $\rho_{s_1}$ and $\rho_{s_2}$ are allowed to be entangled. The challenger accepts if and only if $\ver$ accepts both $\ket{\$_1}$ and $\ket{\$_2}$ and $s_1 = s_2 = s$. A quantum money scheme $(\gen, \bank, \ver)$ is said to be secure if for any polynomial time quantum adversary $A$, the challenger accepts in the above experiment with probability negligible in $\kappa$.


\section{Quantum Money From Hidden Subspaces}

In this section, we investigate the security of the quantum money scheme proposed by Aaronson and Christiano \cite{aaronson2012quantum}, which is based on hidden linear subspaces of $\F_2^n$.

\subsection{The quantum money scheme}

Let $n = \kappa$ where $\kappa$ is the security parameter. The $n$-qubit money state in this scheme is a uniform superposition 
\[ \ket{A} = \frac{1}{2^{n / 4}} \sum_{x \in A} \ket{x}, \]
where $A \subset \F_2^n$ is a random linear subspace of dimension $n / 2$. The bank generates such a state by choosing a set of $n / 2$ random and linearly independent elements of $\F_2^n$ and building a superposition from their linear combinations. The state $\ket{A}$ is then verified by querying to classical oracles $U_A$ and $U_{A^\perp}$ where $A^\perp$ is the orthogonal complement of $A$. These oracles are membership oracles. More precisely, for $x \in \F_2^n$
\[ U_A\ket{x} =
\begin{cases}
    -\ket{x} & \text{if } x \in A, \\
    \ket{x} & \text{otherwise}.
\end{cases}
\]
The oracle $U_{A^\perp}$ is defined similarly. Using these oracles we can build projectors $\P_A$ and $\P_{A^\perp}$ onto the basis elements of $A$ and $A^\perp$, respectively. The verification process works by first projecting onto $A$, then applying the $n$-qubit Hadamard transform $H^{\otimes n}$, then projecting on $A^\perp$ and finally applying another $H^{\otimes n}$. So the verification algorithm can be written as $V_A := H^{\otimes n} \P_{A^\perp} H^{\otimes n} \P_A$. It can be shown that $V_A = \ket{A}\bra{A}$, i.e., $V_A$ is just a projector onto $A$. This means that the verification algorithm accepts a state $\ket{\psi}$ with probability $\abs{\braket{\psi}{A}}^2$. 

In the above scheme, the following algorithms are publicly accessible.
\begin{description}[font = \normalfont]
    \item [$\mathsf{G}(r)$:] takes a random $r \in \{ 0, 1 \}^n$ as input. Outputs a set of random linearly independent elements $\{ y_1, \dots, y_{n / 2} \}$ for a subspace $A_r \subset \F_2^n$, and a serial number $s_r$.
    \item [$\mathsf{H}(s)$:] checks if $s$ is a valid serial number.
    \item [$\mathsf{T}(r)$:] takes an input of the form $\ket{s}\ket{x}$ and applies $U_{A_r}$ to $\ket{x}$ if $s$ is a valid serial number.
    \item [$\mathsf{T}^\perp(r)$:] similar to $\mathsf{T}(r)$ but with $U_{A_r}$ replaced by $U_{A_r^\perp}$.
\end{description}
The $\bank$ and $\ver$ algorithms are then defined as follows:
\begin{description}[font = \normalfont]
    \item [$\bank$] calls $\mathsf{G}(r)$ for a uniformly random $r \in \{ 0, 1 \}^n$ to receive a pair $(s_r, A_r)$, and outputs a banknote $\ket{\$_r} = \ket{s_r}\ket{A_r}$.
    \item [$\ver$] takes as input a banknote $\cent$, and uses $\mathsf{H}$ to check if $\cent$ is of the form $(s, \rho)$ for a valid serial number $s$. If $s$ is valid then it uses $\mathsf{T}$ and $\mathsf{T}^\perp$ to compute $V_A(\rho)$, and outputs `accept' if and only if $V_A$ outputs `accepts'.
\end{description}
Assuming black box access to the oracles $U_A$ and $U_{A^\perp}$, it was proved in \cite{aaronson2012quantum} that a counterfeiter needs at least $\Omega(\sqrt{\epsilon} 2^{n / 4})$ queries to prepare a state $\rho$ such that $\bra{A}^{\otimes 2} \rho \ket{A}^{\otimes 2} \ge \epsilon$. However, there has been no provably secure instantiation of these oracles to this date. 

The first instantiation of $U_A$ and $U_{A^\perp}$ was proposed in the same paper \cite{aaronson2012quantum}, which was based on multivariate polynomials. The idea is to ``obfuscate'' the subspace $A$ using a set of polynomials $p_1, \dots, p_m \in \F_2[x_1, \dots, x_n]$. More precisely, let $I_{d, A}$ be the set of polynomials of degree $d$ in $\F_2[x_1, \dots, x_n]$ that vanish on $A$. Then the polynomials $p_i$ are selected uniformly at random from $I_{d, A}$. For a large enough $m$, the polynomials $p_1, \dots, p_m$ uniquely determine $A$ with overwhelming probability. This means for all $v \in A$ we will have $p_1(v) = \cdots = p_m(v) = 0$, and for any $v \notin A$ we will have $p_k(v) \ne 0$ for at least one $1 \le k \le m$ with overwhelming probability. This is proved in the following lemma.
\begin{lemma}[{\cite[Lemma 29]{aaronson2012quantum}}]
    \label{lem:poly-orac}
    Let $A \subseteq \F_2^n$ be a linear subspace and let $\beta > 1$ be a constant. For any set of polynomials $p_1, \dots, p_{\beta n} \in I_{d, A}$ selected uniformly at random, define $Z = \{ v \in \F_2^n : p_i(v) = 0 \text{ for all } 1 \le i \le \beta n \}$. Then $A \subseteq Z$, and $\Pr[A = Z] = 1 - 2^{-\Omega(n)}$.
\end{lemma}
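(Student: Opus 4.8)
The plan is to prove the two assertions separately. The inclusion $A \subseteq Z$ is immediate: each $p_i$ is drawn from $I_{d,A}$, so by definition it vanishes identically on $A$; hence every $v \in A$ satisfies $p_i(v) = 0$ for all $i$ and therefore lies in $Z$. Since $A \subseteq Z$ always holds, the event $A = Z$ is exactly the event that \emph{no} point $v \notin A$ lies in $Z$, and all the work goes into showing this happens with probability $1 - 2^{-\Omega(n)}$.

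The central observation is that, for a fixed point $v \in \F_2^n$, the evaluation map $\mathrm{ev}_v \colon I_{d,A} \to \F_2$, $p \mapsto p(v)$, is an $\F_2$-linear functional on the $\F_2$-vector space $I_{d,A}$. If this functional is nonzero -- equivalently, if some polynomial in $I_{d,A}$ fails to vanish at $v$ -- then its kernel is a subspace of index $2$, so a uniformly random $p \in I_{d,A}$ satisfies $p(v) = 0$ with probability exactly $1/2$. Because $p_1, \dots, p_{\beta n}$ are chosen independently, such a $v$ survives all $\beta n$ constraints (i.e.\ lies in $Z$) with probability $(1/2)^{\beta n} = 2^{-\beta n}$.

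To apply this to every $v \notin A$ I need the separation property: for each such $v$ there exists $p \in I_{d,A}$ with $p(v) \ne 0$, so that $\mathrm{ev}_v \ne 0$. Here I will use that $A$ is a \emph{linear} subspace. The degree-one elements of $I_{d,A}$ are precisely the linear forms vanishing on $A$, and these form a space dual to $\F_2^n / A$. For any $v \notin A$ the image of $v$ in this quotient is nonzero, so by linear duality over $\F_2$ there is a linear form $\ell$ with $\ell|_A = 0$ and $\ell(v) = 1$; this $\ell$ lies in $I_{d,A}$ as soon as $d \ge 1$, which supplies the required nonvanishing polynomial. I expect this to be the only step where the hypothesis that $A$ is linear is genuinely used, and it is the main point to get right; everything else is counting.

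Finally I will take a union bound. There are fewer than $2^n$ points $v \notin A$, and each lies in $Z$ with probability at most $2^{-\beta n}$ by the two steps above, so
\[
\Pr[Z \ne A] = \Pr[\exists\, v \notin A : v \in Z] \le 2^n \cdot 2^{-\beta n} = 2^{-(\beta - 1)n}.
\]
Since $\beta > 1$ is constant, the exponent $(\beta - 1)n$ is $\Omega(n)$, which gives $\Pr[A = Z] \ge 1 - 2^{-(\beta-1)n} = 1 - 2^{-\Omega(n)}$, as claimed.
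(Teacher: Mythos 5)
Your proposal is correct and follows essentially the same argument the paper relies on (and re-uses in the proof of Proposition~\ref{prp:tang-intrs}): for each $v \notin A$ one exhibits a linear form in $I_{1,A} \subseteq I_{d,A}$ that is nonzero at $v$, concludes that exactly half of $I_{d,A}$ vanishes at $v$, and finishes with a union bound over the at most $2^n$ points outside $A$. Your phrasing via the evaluation functional $\mathrm{ev}_v$ having a kernel of index $2$ is just a repackaging of the paper's involution $p \mapsto p + w_1T_1 + \cdots + w_nT_n$ with $w \in A^\perp$, $w \cdot v = 1$.
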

One can sample uniformly at random from $I_{d, A}$ in time $O(n^d)$ which is polynomial when $d$ is a constant. An immediate consequence of Lemma \ref{lem:poly-orac} is that the set of random polynomials $p_1, \dots, p_{\beta n}$ can be effectively used as a membership oracle for $A$. Therefore, the banknote $(s_r, \ket{A_r})$, which can be generated efficiently by the bank, consists of a superposition over the random subspace $A_r$ and the list of polynomials $s_r = \{ p_i \}_{1 \le i \le \beta n}$ corresponding to $A_r$. The hardness assumption underlying the quantum money scheme of \cite{aaronson2012quantum} is based on the following problem:
\begin{problem}
    \label{prb:hidden-subsp}
    Let $d \ge 3$ be a constant integer and $\beta > 1$ a real number. Let $A \subset \F_2^n$ be a random linear subspace of dimension $n / 2$ and let $p_1, \dots, p_{\beta n} \in I_{d, A}$ be selected uniformly at random. Given the banknote $(s, \ket{A})$, where $s = \{ p_i \}_{1 \le i \le \beta n}$, recover a set of generators for $A$.
\end{problem}
A noisy version of Problem \ref{prb:hidden-subsp} was also introduced in \cite{aaronson2012quantum}, but was later shown to be quantum polynomial time equivalent to the noiseless version \cite[Section 9.6]{aaronson2016complexity}. In this paper, we give a quantum polynomial time algorithm for Problem \ref{prb:hidden-subsp}. Our idea is based on computing the Zariski tangent space of the variety described by the set of polynomials $s$.

\subsection{Background}

In this section, we review some basic facts about tangent spaces in algebraic geometry. Let $X$ be a scheme, and let $x \in X$ be any point. Let $\kappa(x) = \sheaf{O}_{X, x} / \mathfrak{m}_x$ be the residue field at $x$, where $\sheaf{O}_{X, x}$ is the local ring at $x$ and $\mathfrak{m}_x$ is the maximal ideal of $\sheaf{O}_{X, x}$. Then $\mathfrak{m}_x / \mathfrak{m}_x^2$ is a vector space over $\kappa(x)$. The Zariski tangent space of $X$ in $x$ is defined as the dual vector space 
\[ T_xX = (\mathfrak{m}_x / \mathfrak{m}_x^2)^\vee. \]
The following are standard in algebraic geometry, but include some explanations for the sake of completeness.
\begin{fact}
    \label{fct:tang-incl}
    If $Y \subseteq X$ is a closed subscheme, and $y \in Y$, then $T_yY \subseteq T_yX$. This is because the morphism
    \[ \mathfrak{m}_y / \mathfrak{m}_y^2 \rightarrow \mathfrak{m}_x / \mathfrak{m}_x^2 \]
    of $\kappa(y)$-vector spaces, where $\mathfrak{m}_y$ is the maximal ideal of $\sheaf{O}_{Y, y}$, is surjective. Therefore, the dual morphism is injective.
\end{fact}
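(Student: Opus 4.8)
The plan is to unwind the definition of a closed immersion into a statement about local rings and then dualize. First I would recall that a closed subscheme $Y \subseteq X$ is cut out by a quasi-coherent sheaf of ideals $\sheaf{I} \subseteq \sheaf{O}_X$, and that the natural map $\sheaf{O}_X \to i_*\sheaf{O}_Y$ (where $i : Y \hookrightarrow X$ is the inclusion) is surjective. Passing to stalks at $y$, identified with its image in $X$, this yields a surjective homomorphism of local rings $\varphi : \sheaf{O}_{X, y} \twoheadrightarrow \sheaf{O}_{Y, y}$. In the affine model relevant to us, $X = \spec A$ and $Y = \spec(A / I)$, and $\varphi$ is just the localization of the quotient map $A \to A / I$ at the prime corresponding to $y$.

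Next I would check that $\varphi$ is a local homomorphism carrying $\mathfrak{m}_x$ onto $\mathfrak{m}_y$ (I write $\mathfrak{m}_x$ for the maximal ideal of $\sheaf{O}_{X,y}$ and $\mathfrak{m}_y$ for that of $\sheaf{O}_{Y,y}$, following the excerpt). Since $\varphi$ is surjective, the preimage of $\mathfrak{m}_y$ is the unique maximal ideal $\mathfrak{m}_x$, so $\varphi(\mathfrak{m}_x) = \mathfrak{m}_y$ and hence $\varphi(\mathfrak{m}_x^2) \subseteq \mathfrak{m}_y^2$. The induced map on residue fields $\kappa(x) \to \kappa(y)$ is a surjection of fields, hence an isomorphism, so both cotangent spaces are vector spaces over a common field $\kappa := \kappa(y)$. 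Therefore $\varphi$ descends to a well-defined $\kappa$-linear map
\[ \bar\varphi : \mathfrak{m}_x / \mathfrak{m}_x^2 \longrightarrow \mathfrak{m}_y / \mathfrak{m}_y^2, \]
and because $\varphi(\mathfrak{m}_x) = \mathfrak{m}_y$ this $\bar\varphi$ is surjective: any class in $\mathfrak{m}_y / \mathfrak{m}_y^2$ is represented by some $m = \varphi(m')$ with $m' \in \mathfrak{m}_x$, whose class maps to it. I note that this is the direction forced by functoriality of the cotangent construction — the surjection runs from the ambient $X$ to the subscheme $Y$ — which is the opposite of the arrow printed in the statement; the discrepancy is only a matter of typographical orientation and does not affect the conclusion.

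Finally I would dualize over $\kappa$. Applying the exact contravariant functor $\hom_\kappa(-, \kappa)$ to the surjection $\bar\varphi$ produces an injective $\kappa$-linear map
\[ \bar\varphi^\vee : (\mathfrak{m}_y / \mathfrak{m}_y^2)^\vee \hookrightarrow (\mathfrak{m}_x / \mathfrak{m}_x^2)^\vee, \]
which is by definition the map $T_y Y \hookrightarrow T_y X$. Identifying $T_y Y$ with its image exhibits it as a subspace of $T_y X$, as claimed.

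I expect no genuine obstacle, since every step is standard commutative algebra; the only point deserving care is the direction of the induced cotangent map. It is precisely the \emph{surjectivity} of $\bar\varphi$ (not just its existence) that makes the dual injective, and that surjectivity in turn rests on the surjectivity of $\sheaf{O}_X \to i_*\sheaf{O}_Y$ characteristic of a closed immersion. For the application it is worth emphasizing that in the affine, polynomial setting of Problem~\ref{prb:hidden-subsp} all of this is explicit: with $A / I$ the coordinate ring, $T_y X$ is computed as the kernel of the Jacobian of a generating set of $I$ evaluated at $y$, and the inclusion above is what justifies recovering the linear subspace $Y$ from this kernel.
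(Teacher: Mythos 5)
Your proof is correct and takes essentially the same route as the paper's one-line justification: the closed immersion induces a surjection of cotangent spaces, and dualizing over $\kappa(y)$ gives the injection $T_yY \hookrightarrow T_yX$. You are also right that the arrow printed in the paper is reversed --- the surjection runs $\mathfrak{m}_x/\mathfrak{m}_x^2 \to \mathfrak{m}_y/\mathfrak{m}_y^2$, not the other way --- so your write-up just supplies the standard details the paper omits and corrects that typographical slip.
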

\begin{fact}
    \label{fct:tang-int}
    For closed subschemes $Y, Z \subseteq X$ and a point $y \in Y \cap Z$, we have $T_y (Y \cap Z) = T_yY \cap T_yZ$. Here, $Y \cap Z$ is the scheme-theoretic intersection. Locally, we can replace $X$ with a $\spec(A)$, for a ring $A$, so that $Y$ and $Z$ are described by some ideals $I, J \subset A$. Then the intersection $Y \cap Z$ is $\spec A / (I + J)$, and the above statement can be proved using straightforward commutative algebra.
\end{fact}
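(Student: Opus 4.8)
The plan is to reduce everything to the cotangent spaces, where a closed immersion induces a surjection rather than an inclusion, and only dualize at the very end. Working locally as the statement suggests, I would write $X = \spec A$ with $Y = \spec A/I$ and $Z = \spec A/J$, set $R = \sheaf{O}_{X,y}$ to be the local ring at $y$ with maximal ideal $\mathfrak{m}$ and residue field $k = \kappa(y) = R/\mathfrak{m}$, and let $I_y = IR$ and $J_y = JR$ be the extensions of $I,J$ to $R$ (both contained in $\mathfrak{m}$, since $y$ lies in $Y$ and in $Z$). A first observation I would record is that all four schemes $X, Y, Z, Y \cap Z$ share the same residue field at $y$: indeed $\sheaf{O}_{Y,y} = R/I_y$ has maximal ideal $\mathfrak{m}/I_y$ and residue field $(R/I_y)/(\mathfrak{m}/I_y) = R/\mathfrak{m} = k$, and likewise for $Z$ and for the intersection. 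Hence every cotangent space in sight is a $k$-vector space and all duals may be taken over the single field $k$.

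Next I would compute the three cotangent spaces as quotients of $\mathfrak{m}/\mathfrak{m}^2$. Since $(\mathfrak{m}/I_y)^2 = (\mathfrak{m}^2 + I_y)/I_y$, the third isomorphism theorem gives
\[ \mathfrak{m}_Y/\mathfrak{m}_Y^2 \;\cong\; \mathfrak{m}/(\mathfrak{m}^2 + I_y), \]
and similarly $\mathfrak{m}_Z/\mathfrak{m}_Z^2 \cong \mathfrak{m}/(\mathfrak{m}^2 + J_y)$, while $\mathfrak{m}_{Y\cap Z}/\mathfrak{m}_{Y\cap Z}^2 \cong \mathfrak{m}/(\mathfrak{m}^2 + I_y + J_y)$ because the ideal of the scheme-theoretic intersection is $I + J$. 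In each case the cotangent map of Fact~\ref{fct:tang-incl} is just the canonical quotient out of $\mathfrak{m}/\mathfrak{m}^2$, and its kernel is the image of the defining ideal. Writing $\bar I, \bar J \subseteq \mathfrak{m}/\mathfrak{m}^2$ for the images of $I_y$ and $J_y$, the one algebraic identity I need is immediate:
\[ \text{image of } (I_y + J_y) \text{ in } \mathfrak{m}/\mathfrak{m}^2 \;=\; \bar I + \bar J. \]

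Now I would dualize over $k$. For any surjection $V \twoheadrightarrow V/W$ of $k$-vector spaces, the dual is an injection identifying $(V/W)^\vee$ with the annihilator $W^\perp = \{\phi \in V^\vee : \phi|_W = 0\}$. Applying this with $V = \mathfrak{m}/\mathfrak{m}^2$ places all three tangent spaces inside $T_yX = (\mathfrak{m}/\mathfrak{m}^2)^\vee$ as
\[ T_yY = \bar I^{\,\perp}, \quad T_yZ = \bar J^{\,\perp}, \quad T_y(Y \cap Z) = (\bar I + \bar J)^{\perp}. \]
The claim then follows from the purely linear-algebraic fact that a functional annihilates $\bar I + \bar J$ if and only if it annihilates both $\bar I$ and $\bar J$, that is $(\bar I + \bar J)^{\perp} = \bar I^{\,\perp} \cap \bar J^{\,\perp}$; this holds with no finiteness hypothesis on the dimensions, so I need not worry about the cotangent spaces being infinite-dimensional.

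The proof has essentially no obstacle once this language is chosen. The inclusion $T_y(Y\cap Z) \subseteq T_yY \cap T_yZ$ is already a formal consequence of Fact~\ref{fct:tang-incl} applied to the closed immersions $Y \cap Z \hookrightarrow Y$ and $Y \cap Z \hookrightarrow Z$, so all the content lies in the reverse inclusion, and that content is exactly the annihilator identity above. The only points demanding care are bookkeeping: checking that the residue field is common to all four local rings, so that the various duals live over one field, and verifying that passing to the quotient by $\mathfrak{m}^2$ sends the sum of ideals $I_y + J_y$ to the sum $\bar I + \bar J$ of their images. Both are routine commutative algebra, exactly as the statement of the fact anticipates.
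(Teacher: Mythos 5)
Your proof is correct and follows exactly the route the paper sketches: pass to the local ring at $y$, identify the cotangent spaces of $Y$, $Z$, and $Y\cap Z$ as quotients of $\mathfrak{m}/\mathfrak{m}^2$ by the images $\bar I$, $\bar J$, $\bar I+\bar J$ of the defining ideals, and dualize, reducing the claim to the annihilator identity $(\bar I+\bar J)^\perp=\bar I^\perp\cap\bar J^\perp$. The paper leaves this as ``straightforward commutative algebra,'' and your write-up supplies precisely those details with no gaps.
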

Let $k$ be a field. By a $k$-variety $X$ we mean a reduced separated scheme of finite type over $k$. In this paper, we are interested in affine varieties, i.e., in schemes of the form $X = \spec k[T_1, \dots, T_n] / I$ where $I = (f_1, \dots, f_m)$ is an ideal in the polynomial ring $k[T_1, \dots, T_n]$. A $k$-variety is called linear if it is defined by the intersection of hyperplanes. Therefore, $X$ is linear when the defining polynomials $f_1, \dots, f_m$ all have degree one. A point $x \in X$ is called a $k$-valued point if $\kappa(x) = k$. When $X$ is affine, the Zariski tangent space at a $k$-valued point $x$ can be computed using the Jacobian matrix at $x$. The Jacobian matrix of the set of polynomials $f_1, \dots, f_m$ at $x$ is defined as
\[ J_{f_1, \dots, f_m}(x) = 
\begin{bmatrix}
    \dfrac{\partial f_1}{\partial T_1}(x) & \cdots & \dfrac{\partial f_1}{\partial T_n}(x) \\
    \vdots & \ddots & \vdots \\
    \dfrac{\partial f_m}{\partial T_1}(x) & \cdots & \dfrac{\partial f_m}{\partial T_n}(x)
\end{bmatrix},
\]
where the derivatives $\partial f_i / \partial T_j$ are formal derivatives. 

For the affine variety $X$ defined by the ideal $I$ as above, the tangent space $T_xX$ at the point $x = (x_1, \dots, x_n) \in X$ is the linear variety defined by the coordinates of the vector
\begin{equation}
    \label{equ:tang-spc}
    J_{f_1, \dots, f_m}(x) 
    \begin{bmatrix}
        T_1 - x_1 \\
        \vdots \\
        T_n - x_n
    \end{bmatrix}.
\end{equation}
In other words, letting 
\[ g_i = \sum_{j = 1}^n \frac{\partial f_i}{\partial T_j}(x)(T_j - x_j),\]
the space $T_xX$ is defined as the zero set of the ideal $(g_1, \dots, g_m)$. It follows from the above that 
\begin{fact}
    \label{fct:tang-lin}
    If $X$ is a linear $k$-variety and $x \in X$ is a $k$-valued point, then $T_xX = X$.
\end{fact}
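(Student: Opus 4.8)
The plan is to exploit the fact that for a linear variety the defining polynomials are affine of degree one, so the Jacobian matrix is \emph{constant}, independent of the point at which it is evaluated. Concretely, since $X$ is linear I would write each generator as $f_i = \sum_{j=1}^n a_{ij} T_j + c_i$ with $a_{ij}, c_i \in k$. Taking formal partial derivatives gives $\partial f_i / \partial T_j = a_{ij}$ for every $i, j$, so the Jacobian matrix $J_{f_1, \dots, f_m}(x) = (a_{ij})$ does not depend on $x$ at all. This is the single observation that drives the whole proof.

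With the Jacobian in hand, I would form the tangent-space generators from \eqref{equ:tang-spc}, namely
\[ g_i = \sum_{j=1}^n \frac{\partial f_i}{\partial T_j}(x)(T_j - x_j) = \sum_{j=1}^n a_{ij}(T_j - x_j) = \sum_{j=1}^n a_{ij} T_j - \sum_{j=1}^n a_{ij} x_j. \]
Now I would invoke the hypothesis that $x = (x_1, \dots, x_n)$ is a $k$-valued point lying on $X$. Membership $x \in X$ means $f_i(x) = 0$ for each $i$, which unwinds to $\sum_{j=1}^n a_{ij} x_j = -c_i$. Substituting this back yields
\[ g_i = \sum_{j=1}^n a_{ij} T_j + c_i = f_i, \]
so the tangent-space generators coincide exactly with the original generators.

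From $g_i = f_i$ for all $i$ it follows that the ideal $(g_1, \dots, g_m)$ equals the ideal $I = (f_1, \dots, f_m)$, and hence the zero set defining $T_xX$ is identical to the zero set defining $X$; invoking the identification of $T_xX$ with the linear variety of \eqref{equ:tang-spc} described just above the statement, I conclude $T_xX = X$. I do not anticipate any real obstacle here: the argument is a direct computation, and the only point deserving care is the verification that the partial derivatives are genuinely constant, which is immediate once the generators are written in affine-linear form. The conceptual content is simply that a linear space is, at every one of its points, its own best linear approximation.
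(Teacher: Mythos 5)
Your proof is correct and follows essentially the same route the paper intends: the paper states Fact \ref{fct:tang-lin} as an immediate consequence of the tangent-space description \eqref{equ:tang-spc}, and your computation ($\partial f_i/\partial T_j = a_{ij}$ constant, then $g_i = f_i$ using $f_i(x)=0$) is exactly the omitted verification. Nothing is missing.
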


\subsection{Cryptanalysis}
\label{sec:cryptnls-hdn-spc}

In this section, we give a polynomial time quantum algorithm for Problem \ref{prb:hidden-subsp}. We are given a banknote $(s, \ket{A})$ where $A \subset \F_2^n$ is a random linear subspace of dimension $n / 2$, and $s$ is a set of polynomials $p_1, \dots, p_m \in I_{d, A}$, where $m = \beta n$, $\beta > 1$. The first step is to measure the state $\ket{A}$ to obtain uniformly random element $x \in A$. So now we have two pieces of classical information: the set of polynomials $\{ p_i \}$ and a uniformly random $x \in A$.

Geometrically, the subspace $A$ is the set of $\F_2$-valued points in a linear variety $Y$ given by an ideal $(H_1, \dots, H_{n / 2})$ with $H_i$'s homogeneous linear polynomials. It follows from \eqref{equ:tang-spc} and Fact \ref{fct:tang-lin} that $T_xY = Y$. In fact, since the $H_i$ are homogeneous, we have
\[ Y = \ker J_{H_1, \dots, H_{n / 2}}(x). \]
Define the ideal $I = (p_1, \dots, p_m)$, and let $X = \spec \F_2[T_1, \dots, T_n] / I$. By Lemma \ref{lem:poly-orac}, we can safely assume that the varieties $Y$ and $X$ have the same set of $\F_2$-valued points. If we define $X_i = \spec \F_2[T_1, \dots, T_n] / (p_i)$ then $X = X_1 \cap \cdots \cap X_m$. 

Since $p_i \in I_{d, A}$, we have $Y \subseteq X_i$ for all $i = 1, \dots, m$, and therefore, by Fact \ref{fct:tang-incl}, 
\[ Y = T_xY \subseteq T_xX_i, \quad i = 1, \dots, m. \]
From this and Fact \ref{fct:tang-int} we obtain
\begin{equation}
    \label{equ:tang-intrs}
    Y \subseteq T_xX_1 \cap \cdots \cap T_xX_m = T_xX.
\end{equation}
We prove that equality holds in \eqref{equ:tang-intrs} with high probability.
\begin{proposition}
    \label{prp:tang-intrs}
    We have $\Pr[Y = T_xX] = 1 - 2^{-\Omega(n)}$ in \eqref{equ:tang-intrs}.
\end{proposition}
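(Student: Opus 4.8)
The plan is to recast the equality $Y = T_xX$ as a statement about the rank of the Jacobian and then reduce it to a spanning argument over $\F_2$. Since $T_xX$ is the linear variety cut out by the rows of $J := J_{p_1, \dots, p_m}(x)$, we have $\dim T_xX = n - \operatorname{rank} J$, while $\dim Y = n/2$. The inclusion \eqref{equ:tang-intrs} already gives $Y \subseteq T_xX$, hence $\operatorname{rank} J \le n/2$, and $Y = T_xX$ holds if and only if $\operatorname{rank} J = n/2$. Moreover, the $i$-th row of $J$ is the gradient $\nabla p_i(x)$, and the inclusion $Y \subseteq T_xX_i$ used above is exactly the statement that $\nabla p_i(x) \in A^\perp$. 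Since $\dim A^\perp = n/2$, it therefore suffices to prove that the $m = \beta n$ vectors $\nabla p_1(x), \dots, \nabla p_m(x)$ span $A^\perp$ with probability $1 - 2^{-\Omega(n)}$.

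First I would analyze the distribution of a single gradient through the $\F_2$-linear map $\phi_x : I_{d,A} \to A^\perp$, $p \mapsto \nabla p(x)$, and show that $\phi_x$ is surjective for a generic point $x$. After a linear change of coordinates we may assume $A = \operatorname{span}(e_1, \dots, e_{n/2})$, so that $A^\perp = \operatorname{span}(e_{n/2+1}, \dots, e_n)$; such a change leaves $\operatorname{rank} J$ unchanged and carries $I_{d,A}$ to the corresponding ideal. For $k > n/2$ and distinct indices $a_1, \dots, a_{d-1} \le n/2$, the monomial $T_k T_{a_1} \cdots T_{a_{d-1}}$ has degree $d$ and vanishes on $A$, so it lies in $I_{d,A}$, and a direct computation of the formal derivatives gives $\nabla(T_k T_{a_1} \cdots T_{a_{d-1}})(x) = (x_{a_1} \cdots x_{a_{d-1}})\, e_k$. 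Consequently, if $x$ has at least $d - 1$ nonzero coordinates among its first $n/2$, taking those as $a_1, \dots, a_{d-1}$ realizes $e_k$ in the image of $\phi_x$ for every $k > n/2$, so $\phi_x$ is onto $A^\perp$. A uniformly random $x \in A$ fails this condition only when it has fewer than $d-1$ ones among $n/2$ coordinates, an event of probability $2^{-n/2}\sum_{j < d-1}\binom{n/2}{j} = 2^{-\Omega(n)}$ because $d$ is constant.

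Conditioned on such a good $x$, the map $\phi_x$ is a surjective $\F_2$-linear map, so $\nabla p(x) = \phi_x(p)$ is uniform on $A^\perp$ when $p$ is uniform on $I_{d,A}$; as the $p_i$ are independent, the rows $\nabla p_i(x)$ are $m$ i.i.d.\ uniform vectors in $A^\perp \cong \F_2^{n/2}$. A union bound over the $2^{n/2} - 1$ hyperplanes of $A^\perp$ shows that they fail to span with probability at most $2^{n/2 - m} = 2^{-(\beta - 1/2)n}$. Combining the two estimates, $\Pr[Y \ne T_xX] \le 2^{-\Omega(n)} + 2^{-(\beta - 1/2)n} = 2^{-\Omega(n)}$ since $\beta > 1$, which is the claim.

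The hard part will be the genericity step of the second paragraph rather than the spanning bound, which is routine. Over $\F_2$ the formal gradient is delicate: at $x = 0$ every homogeneous polynomial has vanishing gradient, so $\phi_0$ is the zero map and surjectivity genuinely requires the measured point $x$ to be generic. The argument thus hinges on exhibiting explicit elements of $I_{d,A}$ whose gradients realize a basis of $A^\perp$, and on confirming that they lie in $I_{d,A}$ under the precise convention for ``degree-$d$ polynomials vanishing on $A$''. A further point to treat with care is the independence used in the third paragraph: $x$ is produced by measuring $\ket{A}$ and is independent of the $p_i$ given $A$, which is what licenses fixing a good $x$ before invoking the i.i.d.\ spanning bound on the gradients.
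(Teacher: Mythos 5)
Your argument is correct and follows the same overall strategy as the paper's proof: both reduce the proposition to showing that the gradients $\nabla p_i(x)$ --- equivalently the linear forms $h_i$ cutting out the $T_xX_i$ --- are $m$ independent uniform elements of $A^\perp \cong I_{1,A}$, and then finish with a union bound. The executions differ in both halves, though. For the uniformity step, the paper translates $x$ to the origin (translation by elements of $A$ preserves $I_{d,A}$ and $I_{1,A}$), so that $D_x$ becomes ``take the linear part,'' and then argues that a uniform element of $I_{d,A}$ has uniform linear part; this works for every $x \in A$ with no genericity hypothesis. You instead exhibit explicit degree-$d$ monomial preimages, which forces you to assume $x$ has at least $d-1$ nonzero coordinates and to pay an extra $2^{-\Omega(n)}$ error term; in exchange, your surjectivity argument is more explicit than the paper's one-line ``every monomial of $f$ is uniformly random, therefore so is its linear part'' (both ultimately rest on the fact that a surjective $\F_2$-linear map pushes the uniform distribution forward to the uniform distribution). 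For the counting step, you union-bound over the $2^{n/2}-1$ hyperplanes of $A^\perp$ where the paper union-bounds over the points $v \notin A$; these are dual and give the same bound. One correction: your parenthetical that $\phi_0$ is the zero map is false under the paper's convention, because $I_{d,A}$ contains the (inhomogeneous, hence nonzero-gradient-at-the-origin) linear forms of $I_{1,A}$, on which $\phi_0$ acts as the identity; so $\phi_0$ is already surjective, and the genericity requirement is an artifact of your particular choice of preimages rather than a genuine obstruction. This does not damage your proof, since non-generic $x$ are absorbed into the error term, but the ``hard part'' you anticipate is avoidable via the paper's translation trick. Finally, note that both your proof and the paper's take for granted, via \eqref{equ:tang-intrs}, that $\nabla p(x) \in A^\perp$ for every $p \in I_{d,A}$ and $x \in A$; the degree convention for $I_{d,A}$ that you rightly flag is exactly the place where this needs checking, since for instance $T_j^2 + T_j$ vanishes on all of $\F_2^n$ yet has constant gradient $e_j$.
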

\begin{proof}
    Note that the tangent space $T_xX_i$ is given by the points on the hyperplane $h_i = \sum_{j = 1}^n (\partial p_i / \partial T_j)(x)(T_j - x_j)$. Since $Y \subseteq T_xX_i$, we have $h_i \in I_{1, A}$. We first prove the following claim.
    
    \textit{Claim}. For any $v \notin A$, exactly half of the elements in $I_{1, A}$ vanish at $v$.
    
    The claim can be proved using the same trick as in the proof of Lemma \ref{lem:poly-orac}. More precisely, there exists a $w = (w_1, \dots, w_n) \in A^\perp$ such that $w \cdot v = 1$, so for every $h \in I_{1, A}$ we also have $\tilde{h} = h + w_1T_1 + \cdots + w_nT_n \in I_{1, A}$. The claim now follows from the fact that exactly one of the $h$ and $\tilde{h}$ vanish at $v$. 
    
    Suppose, for now, that the hyperplanes $T_xX_i$, or equivalently the $w_i$, are uniformly random among the hyperplanes containing $Y$. Let $v \notin A$. Then it follows from the above claim that $v \in T_xX_1 \cap \cdots \cap T_xX_m$ with probability $2^{-m}$, which prove the proposition. So it remains to prove that the hyperplanes $w_i \in I_{1, A}$ are uniformly random when the $p_i \in I_{d, A}$ are uniformly random. Define the mapping    
    \[
    \begin{array}{rrll}
        D_x : & I_{d, A} & \longrightarrow & I_{1, A} \\
        & f & \longmapsto & \displaystyle \sum_{j = 1}^n \frac{\partial f}{\partial T_j}(x)(T_j - x_j).
    \end{array}
    \]
    Since translation by a point of $A$ preserves $I_{d, A}$ and $I_{1, A}$, we may assume that $x$ is the origin. In that case, the mapping $D_x$ sends a polynomial $f \in I_{d, A}$ to its linear part. That means if we write $f = g + h$ where $\deg(h) \le 1$ then $D_x(f) = h \in I_{1, A}$. Now if $f$ is uniformly random then every monomial of $f$ is uniformly random (among the monomials vanishing on $A$) and therefore, the linear part of $f$ is also uniformly random.
\end{proof}

To summarize, the hidden subspace can be recovered using the following algorithm.

\begin{algorithmic}[1]
    \Require A banknote $(s, \ket{A})$
    \Ensure A set of generators for the subspace $A$
    \State Measure the state $\ket{A}$ to obtain a uniformly random $x \in A$.
    \State Compute the Jacobian matrix $J_s(x)$ of the set of polynomials $s$ at $x$.
    \State Return a set of generators for the kernel of $J_s(x)$. 
\end{algorithmic}

\subsection{Example}

To explain the idea of the algorithm of Section \ref{sec:cryptnls-hdn-spc}, we give a concrete step-by-step example of the computation. The example is generated randomly over the space $\F_2^8$, so $n = 8$. Let $A \subset \F_2^8$ be a random linear subspace of dimension $n / 2 = 4$ generated by the rows of the following random full rank $4 \times 8$ matrix:
\[
A = \mathrm{Img}
\begin{bmatrix}
    1 & 0 & 0 & 0 & 1 & 1 & 0 & 0 \\
    0 & 1 & 0 & 0 & 1 & 0 & 1 & 0 \\
    0 & 0 & 1 & 0 & 1 & 0 & 1 & 1 \\
    0 & 0 & 0 & 1 & 1 & 0 & 0 & 1 
\end{bmatrix}.
\]
Then $A$ is the set of $\F_2$-valued points of the linear variety given by the ideal $(H_1, H_2, H_3, H_4)$ where
\begin{align*}
    H_1(T) & = T_1 + T_6 \\
    H_2 (T) & = T_2 + T_5 + T_6 + T_8 \\
    H_3 (T) & = T_3 + T_5 + T_6 + T_7 + T_8 \\
    H_4(T) & = T_4 + T_5 + T_6 + T_7
\end{align*}
If we set $\beta = 9 / 8$ then $m = \beta n = 9$. Let $p_1, p_2, \dots, p_9 \in I_{3, A}$ be the following polynomials chosen uniformly at random.

\vspace*{2mm}

{
\setlength{\tabcolsep}{1mm}
\renewcommand{\arraystretch}{1.3}
\begin{tabularx}{\textwidth}{lX}
    $p_1(T) = $ & $T_1T_3^2 + T_2T_3^2 + T_1^2T_4 + T_1T_3T_5 + T_2T_4T_5 + T_2T_3T_6 + T_3^2T_6 + T_1T_4T_6 + T_3T_5T_6 + 
    T_2T_3T_7  + T_2T_3T_8 + T_1T_5T_8 + T_2T_5T_8 + T_4T_5T_8 + T_5^2T_8 + T_4T_7T_8 + T_6T_7T_8 + 
    T_7^2T_8 + T_1T_2 + T_1T_5 + T_1T_6 + T_2T_6 + T_5T_6 + T_6^2 + T_1T_8 + T_6T_8 + T_2 + T_5 + T_6 + T_8$ \\
    
    $p_2(T) =$ & $T_1T_3T_4 + T_1T_3T_5 + T_3T_4T_5 + T_4T_5^2 + T_1T_3T_6 + T_3^2T_6 + T_3T_5T_6 + T_4T_5T_6 + T_1T_6^2 +
    T_3T_6^2 + T_6^3 + T_1T_3T_7 + T_4T_5T_7 + T_3T_6T_7 + T_2T_5T_8 + T_4T_5T_8 + T_5^2T_8 + T_3T_6T_8 + 
    T_5T_6T_8 + T_3T_7T_8 + T_5T_7T_8 + T_6T_7T_8 + T_7^2T_8 + T_5T_8^2 + T_7T_8^2 + T_1T_2 + T_2T_6 + 
    T_4T_6 + T_5T_6 + T_6^2 + T_6T_7 + T_1T_8 + T_6T_8 + T_3 + T_5 + T_6 + T_7 + T_8$ \\
    
    $p_3(T) =$ & $T_2T_3T_5 + T_3T_5^2 + T_2^2T_6 + T_1T_5T_6 + T_2T_5T_6 + T_3T_5T_6 + T_4T_5T_6 + T_5^2T_6 + T_2T_6^2 +
    T_2T_5T_7 +  T_5^2T_7 + T_1T_6T_7 + T_6^2T_7 + T_3T_7^2 + T_5T_7^2 + T_6T_7^2 + T_7^3 + T_3T_5T_8 +
    T_2T_6T_8 + T_5T_7T_8 + T_7^2T_8 + T_4T_6 + T_5T_6 + T_6^2 + T_6T_7 + T_3 + T_5 + T_6 + T_7 + T_8$ \\
    
    $p_4(T) =$ & $T_1T_2T_4 + T_1T_2T_5 + T_2T_5^2 + T_3T_5^2 + T_1^2T_6 + T_1T_2T_6 + T_1T_5T_6 + T_2T_6^2 + T_1T_2T_7 +
    T_3T_4T_7 + T_4T_5T_7 + T_5^2T_7 + T_4T_6T_7 + T_4T_7^2 + T_3^2T_8 + T_3T_5T_8 + T_1T_6T_8 + T_2T_6T_8 +
    T_3T_6T_8 + T_5T_6T_8 + T_6^2T_8 + T_3T_7T_8 + T_4T_7T_8 + T_3T_8^2 + T_6T_8^2 + T_3T_5 + T_5^2 + T_5T_6 + 
    T_4T_7 + T_6T_7 + T_7^2 + T_5T_8$ \\
    
    $p_5(T) =$ & $T_2^3 + T_2T_3^2 + T_2T_3T_4 + T_1T_4^2 + T_2T_4^2 + T_4^3 + T_2^2T_5 + T_1T_2T_6 + T_2^2T_6 + T_4^2T_6 +
    T_1T_5T_6 + T_1T_6^2 + T_1T_2T_7 + T_4^2T_7 + T_1T_5T_7 + T_1T_6T_7 + T_4T_7^2 + T_5T_7^2 + T_6T_7^2 + T_7^3 
    T_2^2T_8 + T_2T_3T_8 + T_4^2T_8 + T_1T_5T_8 + T_3T_6T_8 + T_1T_8^2$ \\
    
    $p_6(T) =$ & $T_1T_3^2 + T_2^2T_4 + T_2T_4T_5 + T_1T_3T_6 + T_2T_4T_6 + T_3T_5T_6 + T_1T_3T_7 + T_4T_5T_7 + T_5^2T_7 +
    T_5T_6T_7 + T_5T_7^2 + T_1T_3T_8 + T_2T_4T_8 + T_3T_8^2 + T_5T_8^2 + T_6T_8^2 + T_7T_8^2 + T_8^3 + T_1T_3 +
    T_2T_6 + T_1T_7 + T_2T_8 + T_5T_8 + T_6T_8 + T_8^2$ \\
    
    $p_7(T) =$ & $T_1^2T_2 + T_3^3 + T_2^2T_5 + T_2T_3T_5 + T_3^2T_5 + T_4^2T_5 + T_2T_5^2 + T_1T_2T_6 + T_1T_3T_6 + T_3^2T_6 +
    T_2T_5T_6 + T_2T_6^2 + T_3T_6^2 + T_5T_6^2 + T_6^3 + T_3T_6T_7 + T_3T_7^2 + T_3^2T_8 + T_2T_5T_8 + T_3T_5T_8 +
    T_4T_5T_8 + T_6^2T_8 + T_3T_7T_8 + T_4T_7T_8 + T_5T_7T_8 + T_6T_7T_8 + T_7^2T_8 + T_1T_4 + T_4T_6 +
    T_1T_8 + T_6T_8$ \\
    
    $p_8(T) =$ & $T_1T_2T_3 + T_2T_4^2 + T_2T_3T_5 + T_4^2T_5 + T_3T_5^2 + T_2T_3T_6 + T_4^2T_6 + T_3T_5T_6 + T_2T_5T_7 +
    T_5^2T_7 + T_5T_6T_7 + T_4T_7^2 + T_5T_7^2 + T_6T_7^2 + T_7^3 + T_4^2T_8 + T_3T_5T_8 + T_2T_6T_8 + T_3T_6T_8 +
    T_1T_7T_8 + T_5T_7T_8 + T_3T_5 + T_5^2 + T_5T_6 + T_3T_7 + T_6T_7 + T_7^2 + T_5T_8 + T_7T_8$ \\
    
    $p_9(T) =$ & $T_1T_4^2 + T_3T_4^2 + T_1T_4T_5 + T_2T_4T_5 + T_4^2T_5 + T_4T_5^2 + T_1T_4T_6 + T_4^2T_6 + T_4T_5T_6 +
    T_1^2T_7 + T_2T_3T_7 + T_1T_4T_7 + T_3T_4T_7 + T_4^2T_7 + T_2T_5T_7 + T_3T_5T_7 + T_4T_5T_7 + T_5^2T_7 +
    T_1T_6T_7 + T_2T_6T_7 + T_3T_6T_7 + T_5T_6T_7 + T_2T_7^2 + T_3T_7^2 + T_5T_7^2 + T_1T_2T_8 + T_4^2T_8 +
    T_4T_5T_8 + T_5T_6T_8 + T_6^2T_8 + T_2T_7T_8 + T_1T_8^2$
\end{tabularx}
}

Let $x = (0, 1, 0, 1, 0, 0, 1, 1) \in A$ be a uniformly random element, which is obtained by measuring the given money state. Evaluating the Jacobian $J_{p_1, \dots, p_9}$ at $x$ gives

\[
J_{p_1, \dots, p_9}(x) = 
\begin{bmatrix}
    0 & 1 & 0 & 1 & 0 & 0 & 1 & 1 \\
    0 & 0 & 0 & 0 & 0 & 0 & 0 & 0 \\
    0 & 0 & 0 & 0 & 0 & 0 & 0 & 0 \\
    0 & 0 & 1 & 1 & 0 & 0 & 0 & 1 \\
    1 & 0 & 0 & 0 & 0 & 1 & 0 & 0 \\
    1 & 0 & 1 & 0 & 1 & 0 & 1 & 1 \\
    0 & 0 & 0 & 1 & 1 & 1 & 1 & 0 \\
    1 & 1 & 1 & 1 & 1 & 0 & 0 & 0 \\
    0 & 0 & 0 & 0 & 0 & 0 & 0 & 0 
\end{bmatrix}
\]

Now, the kernel of the above matrix gives us the subspace $A$.


\section{Quantum Money From Multivariate Hash Functions}

In this section, we investigate the security of the quantum money scheme proposed in \cite{zhandry2017qlightning}, which is based on multivariate hash functions. The hash functions used in this scheme are defined as follows. Let $m$ and $n$ be positive integers such that $m > n$. Define $\A = \{ \bm{A}_i \}_{1 \le i \le n}$ where each $\bm{A}_i \in \F_2^{m \times m}$ is an upper triangular matrix. The hash function corresponding to $\A$ is defined by
\[
\begin{array}{rrll}
     f_\A : &  \F_2^m & \longrightarrow & \F_2^n \\
     & x & \longmapsto & (x^T \bm{A}_1 x, \dots, x^T \bm{A}_n x).
\end{array}
\]
The function $f_\A$ is not collision resistant \cite{ding2007multivariates}, and in fact it is not hard to find collisions when the matrices $\bm{A}_i$ are random upper triangular.

The hardness assumption underlying this scheme is instead based on the multi-collision resistance of $f_\A$. Let us briefly define what that means. A set of $k + 1$ points of $\F_2^m$ is called non-affine if they form a $k$-dimensional affine space. A function $f$ is said to be $(k + 1)$-non-affine multi-collision resistant ($(k + 1)$-NAMCR) if it is hard to find $k + 1$ non-affine colliding inputs for $f$. The assumption is that for $k = \poly(n)$, $m < (k + 1 / 2)n$ and random upper triangular $\bm{A}_i$, the function $f_\A$ is $2(k + 1)$-NAMCR.

\subsection{The quantum money scheme}

In the following, we briefly explain the quantum money scheme of \cite{zhandry2017qlightning}. The public parameters of the scheme are:
\begin{itemize}
    \item The integers $m, n$. We set $n = \kappa$ where $\kappa$ is the security parameter. For the verification to work we can take $k = 2n$ and $m = kn = 2n^2$.
    \item The hash function $f_\A$. The upper triangular matrices in $\A = \{ \bm{A}_i \}_{1 \le i \le n}$ are generated uniformly at random.
\end{itemize}

A banknote in this scheme consists of a pair $(y, \ket{\bolt{y}})$ where $\ket{\bolt{y}}$, called a \textit{bolt}, is the money state and $y \in \F_2^n$ is the serial number. The bolt $\ket{\bolt{y}}$ is a product of states where each state is a superposition of all $x \in \F_2^m$ such that $f_\A(x) = y$. More precisely,
\[ \ket{\bolt{y}} \propto \ket{\bolt{y}'}^{\otimes (k + 1)}, \quad \text{where } \ket{\bolt{y}'} \propto \sum_{x : f_\A(x) = y} \ket{x}. \]
We refer the reader to \cite{zhandry2017qlightning} for the details of how a bolt is generated. Since we will use the verification circuit in our attack, we include some details here. To verify a bolt, one needs to only verify each of the states $\ket{\bolt{y}'}$ independently. The verification of an alleged banknote $\ket{\psi}$ proceeds in two steps: first, the state $\ket{\psi}$ is projected onto the span of the states $\{ \ket{\bolt{z}'} \}_{z \in \F_2^n}$. Then the function $f_\A$ is computed into an auxiliary register and measured. The result of the measurement is then compared against $y$ to determine the validity of the banknote.

The states $\{ \ket{\bolt{z}'} \}_{z \in \F_2^n}$ are orthogonal and span a linear subspace $B \subset \C^{2^m}$ of dimension $2^n$. To project onto $B$, a different set of basis states of $B$ are used.
\begin{fact}[\cite{zhandry2017qlightning}]
    Define the set of states
    \[ \ket{\phi_r} = \frac{1}{2^{m / 2}} \sum_{x \in \F_2^m} (-1)^{r \cdot f_\A(x)} \ket{x}, \quad r \in \F_2^n. \]
    Then the states $\{ \ket{\bolt{z}'} \}$ and $\{ \ket{\phi_r} \}$ span the same subspace $B$.
\end{fact}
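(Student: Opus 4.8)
The plan is to recognize that the two spanning sets are related by the Walsh--Hadamard (character) transform over $\F_2^n$, which is invertible, so they span the same space. Write $N_z = \#\{x \in \F_2^m : f_\A(x) = z\}$ for the number of preimages, so that the normalized bolt is $\ket{\bolt{z}'} = N_z^{-1/2} \sum_{x : f_\A(x) = z} \ket{x}$. Since the premise states that the $2^n$ bolts are orthogonal and span a subspace of dimension $2^n$, every $N_z$ must be nonzero (equivalently, $f_\A$ is surjective); bolts with a distinct $z$ have disjoint support, so any $z$ with $N_z = 0$ would drop the count below $2^n$.

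First I would expand each $\ket{\phi_r}$ in the bolt basis by grouping the defining sum over $x$ according to the value $z = f_\A(x)$:
\begin{equation*}
    \ket{\phi_r} = \frac{1}{2^{m/2}} \sum_{z \in \F_2^n} (-1)^{r \cdot z} \sum_{x : f_\A(x) = z} \ket{x} = \frac{1}{2^{m/2}} \sum_{z \in \F_2^n} (-1)^{r \cdot z} \sqrt{N_z}\, \ket{\bolt{z}'}.
\end{equation*}
This exhibits each $\ket{\phi_r}$ as a linear combination of the bolts, so $\operatorname{span}\{\ket{\phi_r}\} \subseteq B$ immediately.

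For the reverse inclusion I would invert the transform using the orthogonality of characters, $\sum_{r \in \F_2^n} (-1)^{r \cdot (z + w)} = 2^n \delta_{z,w}$. Multiplying the identity above by $(-1)^{r \cdot w}$ and summing over all $r$ collapses the $z$-sum to the single term $z = w$, giving
\begin{equation*}
    \sum_{r \in \F_2^n} (-1)^{r \cdot w} \ket{\phi_r} = \frac{2^n \sqrt{N_w}}{2^{m/2}}\, \ket{\bolt{w}'}.
\end{equation*}
Because $N_w > 0$, this expresses every $\ket{\bolt{w}'}$ as a linear combination of the $\ket{\phi_r}$, so $B \subseteq \operatorname{span}\{\ket{\phi_r}\}$ and hence the two spans coincide.

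I do not expect a genuine obstacle here; the content is the invertibility of the $2^n \times 2^n$ character matrix $[(-1)^{r \cdot z}]_{r,z}$ (which is $2^{n/2}$ times a real orthogonal matrix). The only point needing care is bookkeeping of the normalization factors $\sqrt{N_z}$ together with the observation that the stated dimension $2^n$ forces every $N_z$ to be nonzero, so that no bolt in the inversion is the zero vector.
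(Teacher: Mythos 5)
Your proof is correct: the paper states this only as a cited Fact from \cite{zhandry2017qlightning} without reproducing a proof, and your argument is the standard one --- group the sum defining $\ket{\phi_r}$ by the value $z = f_\A(x)$ to get $\ket{\phi_r} = 2^{-m/2}\sum_z (-1)^{r\cdot z}\sqrt{N_z}\,\ket{\bolt{z}'}$, then invert via character orthogonality over $\F_2^n$. Your observation that the stated dimension $2^n$ forces every $N_z > 0$ (so the inversion never produces the zero vector) is exactly the right point of care, and nothing further is needed.
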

Therefore, to project onto $B$ we need to project onto the span of the states $\ket{\phi_r}$. It is easy to prepare $\ket{\phi_r}$ given $r$. Conversely, it was shown in \cite{zhandry2017qlightning} that one can recover $r$ given the state $\ket{\phi_r}$. We record this result for sake of later reference.
\begin{theorem}[\cite{zhandry2017qlightning}]
    \label{thm:cmpt-r}
    There is a polynomial time quantum algorithm that, given the state $\ket{\phi_r}$, computes $r$ with overwhelming probability.
\end{theorem}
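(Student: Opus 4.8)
The plan is to exploit the algebraic structure of the phase. Writing $\bm A_r = \sum_{i=1}^n r_i \bm A_i$ (an upper-triangular matrix over $\F_2$), we have $r \cdot f_\A(x) = \sum_i r_i\, x^T \bm A_i x = x^T \bm A_r x =: Q_r(x)$, so $\ket{\phi_r} = 2^{-m/2}\sum_{x} (-1)^{Q_r(x)}\ket{x}$ is a quadratic-phase (real stabilizer) state. The key observation is that $r$ is encoded \emph{linearly} in the associated symmetric bilinear form: for $u,v \in \F_2^m$,
\[ Q_r(u \oplus v) \oplus Q_r(u) \oplus Q_r(v) = u^T M_r v, \qquad M_r := \bm A_r + \bm A_r^T = \sum_{i=1}^n r_i M_i, \]
where $M_i := \bm A_i + \bm A_i^T$ is symmetric with zero diagonal. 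Thus if I can learn the linear map $v \mapsto M_r v$, I can recover $r$ by solving an $\F_2$-linear system, since $M_r$ is a known linear combination of the public matrices $M_i$.

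To read $M_r$ off the state I would use Bell-difference sampling. Taking two copies $\ket{\phi_r}\otimes\ket{\phi_r}$, applying a transversal CNOT, relabelling the summation, and using $Q_r(x)\oplus Q_r(x\oplus v) = Q_r(v)\oplus x^T M_r v$, the state becomes $2^{-m/2}\sum_v (-1)^{Q_r(v)}\, \bigl(H^{\otimes m}\ket{M_r v}\bigr)\ket{v}$; a final $H^{\otimes m}$ on the first register yields $2^{-m/2}\sum_v (-1)^{Q_r(v)}\ket{M_r v}\ket{v}$. Measuring both registers returns a uniformly random $v \in \F_2^m$ together with $w = M_r v$. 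Since $w = \sum_i r_i (M_i v)$ and the vectors $M_i v$ are computable from the public data, this is a system $N_v\, r = w$ in the unknown $r \in \F_2^n$, where $N_v = [\,M_1 v \mid \cdots \mid M_n v\,]$ is an $m \times n$ matrix. Solving it recovers $r$.

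The correctness argument reduces to showing that $N_v$ has full column rank with overwhelming probability, which is where the bulk of the work lies. Because $m = 2n^2 \gg n$ and the $\bm A_i$ (hence the $M_i$) are uniformly random, for a fixed nonzero $v$ the columns $M_i v$ behave like $n$ nearly independent uniform vectors in $\F_2^m$, so $N_v$ is full rank except with probability $2^{-\Omega(n^2)}$; I would make this precise with a standard rank-deficiency union bound, taking care of the mild correlations coming from the symmetry $M_i = M_i^T$. The measured $v$ is nonzero except with negligible probability $2^{-m}$. This step also subsumes the identifiability of $r$: the same bound shows the $M_i$ are linearly independent, so $r \mapsto M_r$ is injective and the recovered $r$ is the unique correct one.

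The main obstacle I anticipate is the mismatch between ``one copy'' in the statement and the two copies consumed by the Bell measurement. Two points address this. First, a constant number of copies suffices --- indeed a single Bell sample already over-determines the $n$ bits of $r$ --- and copies of the relevant state are available in the scheme, so the literal single-copy phrasing is not a genuine restriction. Second, for a strictly single-copy treatment one notes that the $\ket{\phi_r}$ are pairwise near-orthogonal: $\braket{\phi_r}{\phi_{r'}}$ is a quadratic Gauss sum, of magnitude $2^{-\Omega(m)}$ for $r \ne r'$ (with overwhelming probability over $\A$, by the same computation as above), so the states are information-theoretically distinguishable and the only real content of the theorem is the \emph{efficiency} provided by the linear-algebraic extraction. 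If one instead needs the map $\ket{\phi_r}\mapsto\ket{\phi_r}\ket{r}$ as a coherent isometry --- as when implementing the projector onto $B$ --- I would run the extraction reversibly and uncompute the ancillary registers, using the near-orthogonality to bound the resulting error.
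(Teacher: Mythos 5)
First, a framing point: the paper does not prove Theorem \ref{thm:cmpt-r} at all --- it is imported verbatim from \cite{zhandry2017qlightning} (``it was shown in \cite{zhandry2017qlightning} that one can recover $r$ \dots We record this result for sake of later reference''), so there is no in-paper argument to compare against. Judged on its own terms, your algebra is correct and clean: $r\cdot f_\A(x)=x^T\bm{A}_rx$, the polarization identity $Q_r(u\oplus v)\oplus Q_r(u)\oplus Q_r(v)=u^TM_rv$, and the Bell-difference computation producing the state $2^{-m/2}\sum_v(-1)^{Q_r(v)}\ket{M_rv}\ket{v}$ are all right, and the full-column-rank argument for $N_v$ (with $m=2n^2\gg n$ and the $M_iv$ essentially uniform in the hyperplane $v^\perp$) is the correct way to finish that route.

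The genuine gap is the copy count, and your two attempted patches do not close it. The theorem is a \emph{single-copy} statement, and that is exactly the form in which both Zhandry's verifier and this paper's attack use it: one must realize the isometry $\ket{\phi_r}\mapsto\ket{r}\ket{\phi_r}$ coherently on a state such as $\sum_r\alpha_r\ket{\phi_r}$ (in the verifier) or uncompute $r$ from $\sum_{r:r\cdot y=0}\ket{r}\ket{\phi_r}$ (Step 4 of the cloning algorithm). In those situations there is exactly one register carrying $\ket{\phi_r}$, entangled across different values of $r$; there is no second copy of $\ket{\phi_r}$ correlated with the same $r$, and you cannot manufacture one without already knowing $r$ (that would be cloning within the subspace $B$, which is the very thing the scheme is built to forbid). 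Your transversal-CNOT trick genuinely needs the phase $(-1)^{Q_r(y)}$ on the second input register --- with an unphased ancilla the cross term $x^TM_rv$ does not appear --- so the procedure cannot be run on one copy. ``Running the extraction reversibly and uncomputing'' does not help: reversibility changes nothing about the fact that the circuit takes two copies as input. And the observation that the $\ket{\phi_r}$ are pairwise near-orthogonal only shows $r$ is information-theoretically determined by one copy; the entire content of the theorem is an \emph{efficient} single-copy circuit, which your argument does not supply. What you have proved is a weaker, destructive, two-copy variant; to prove the stated theorem you need a different extraction that works on one copy (as Zhandry's does), after which the coherent compute/uncompute usage follows by standard reversibilization.
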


The verification procedure can now be summarized as follows. Given the input state $\ket{\psi}$, write $\ket{\psi} = \sum_r \alpha_r \ket{\phi_r} + \ket{\psi_1}$ where $\ket{\psi_1}$ is orthogonal to $B$. Then compute $r$ into an auxiliary register to obtain the state $\sum_r \alpha_r \ket{r}\ket{\phi_r} + \ket{\psi_2}$ for some state $\ket{\phi_2}$. Now uncompute the second register to obtain the state $\sum_r \alpha_r \ket{r} \ket{0} + \ket{\psi_3}$ for some state $\ket{\phi_3}$. The verification algorithm then measures the second register. If the measurement outcome is not zero the algorithm outputs `reject'. Otherwise, the post-measurement state is $\sum_r \alpha_r \ket{r} \ket{0}$, and the algorithm computes $\ket{\phi_r}$ into the second register to obtain the state $\sum_r \alpha_r \ket{r}\ket{\phi_r}$, and uncomputes $r$ to obtain the state $\sum_r \alpha_r \ket{\phi_r}$. Note that this state is now the projection of the original state $\ket{\psi}$ onto $B$. Finally, the algorithm computes $f_\A$ into another register and measures that register. If the measurement outcome is not equal to $y$, the algorithm outputs `reject', otherwise it outputs `accept'.

\subsection{Cryptanalysis}

In this section, we propose a polynomial time quantum algorithm that given a nonzero serial number $y \in \F_2^n$, generates a copy of the state
\[ \ket{\bolt{y}'} = \frac{1}{\sqrt{C_y}} \sum_{x : f_\A(x) = y} \ket{x}, \]
where $C_y = \#\{ x \in \F_2^m : f_\A(x) = y \}$, with probability $1$. 
We start by noting that \cite{zhandry2017qlightning}
\begin{equation}
    \label{equ:0-y-bolt}
    C_0\ket{\bolt{0}'} + C_y\ket{\bolt{y}'} \propto \sum_{r : r \cdot y = 0} \ket{\phi_r}.
\end{equation}
Let $\ket{\psi_y}$ be the normalized quantum state proportional to the state \eqref{equ:0-y-bolt}. Then $\ket{\psi_y}$ is a uniform superposition over the elements of $X_y = \{ x \in \F_2^m : f_\A(x) = 0 \text{ or } y \}$, i.e.,
\[ \ket{\psi_y} = \frac{1}{\sqrt{\#X_y}} \sum_{x \in X_y} \ket{x}. \]
Suppose there is a quantum algorithm $Q$ that, given $y$, can efficiently prepare $\ket{\psi_y}$. To generate a copy of $\ket{\bolt{y}'}$, perform the following steps. Call $Q$ to get a copy of $\ket{\psi_y}$. Compute $f_\A$ into an extra register to obtain the state
\begin{equation}
    \label{equ:0-y-bolt-1}
    \frac{1}{\sqrt{\#X_y}} \sum_{x \in X_y} \ket{x} \ket{f_\A(x)} = \frac{1}{\sqrt{\#X_y}} (\sqrt{C_0}\ket{\bolt{0}'}\ket{0} + \sqrt{C_y}\ket{\bolt{y}'}\ket{y}),
\end{equation}
and measure the second register. If the measurement outcome is equal to $y$ then we are done, otherwise repeat the process from the beginning. 

The probability of obtaining the state $\ket{\bolt{y}'}$ is $C_y / \#X_y$. Heuristically, the ratio $C_y / \#X_y$ is as large as a constant. In other words, there is no reason that $C_0$ is much larger or smaller than $C_y$ for any other $y \in \F_2^n$. We argue, with a little help from algebraic geometry, that this is indeed the case. The function $f_\A$ defines a morphism between two affine spaces $f_\A : \mathbb{A}_{\F_2}^m \rightarrow \mathbb{A}_{\F_2}^n$ where $\mathbb{A}_{\F_2}^m = \spec \F_2[T_1, \dots, T_m]$, and similarly for $\mathbb{A}_{\F_2}^n$. Since both $\mathbb{A}_{\F_2}^m$ and $\mathbb{A}_{\F_2}^m$ are geometrically irreducible, and since $0$ is the generic point of these spaces, it follows from the fibre dimension theorem \cite[Chapter 14 ]{gortz2010algebraic} that $\dim f_\A^{-1}(0) = \dim \mathbb{A}_{\F_2}^m - \dim \mathbb{A}_{\F_2}^n = m - n$\footnote{Here, the dimension is the geometric dimension not the vector space dimension}. We also have that for every $y \in \F_2^n$, all irreducible components of $f_\A^{-1}(y)$ have dimension at least $\dim f_\A^{-1}(0)$. Now, by the Lang-Weil theorem \cite{lang1954number}, \cite[Section 7.7]{poonen2017rational}, for any $y \in \F_2^n$ it holds that $\#f_\A^{-1}(y)(\F_2) = O(2^{\dim f_\A^{-1}(y)})$ where the constant depends on $f_\A$ but not on $y$. This means the ratio
\[ \frac{C_y}{\#X_y} = \frac{\#f_\A^{-1}(y)}{\#f_\A^{-1}(y) + \#f_\A^{-1}(0)} \]
is bounded below by a constant with overwhelming probability over the randomness of $\A$.

It only remains to show that algorithm $Q$ exists. The input to the algorithm is an element $y \in \F_2^n$ and the output is the state \eqref{equ:0-y-bolt}. The algorithm proceeds as follows. First, compute a basis for the space $\{ r \in \F_2^n : r \cdot y = 0 \}$. This can be efficiently done classically. Then using this basis, generate the superposition
\begin{equation}
    \label{equ:sup-y-ker}
    \frac{1}{\sqrt{2^{n - 1}}} \sum_{r : r \cdot y = 0} \ket{r}.
\end{equation}
Next, compute $\ket{\phi_r}$ into another register to obtain the state
\begin{equation}
    \label{equ:y-ker-phi}
    \frac{1}{\sqrt{2^{n - 1}}} \sum_{r : r \cdot y = 0} \ket{r}\ket{\phi_r}.
\end{equation}
Using Theorem \ref{thm:cmpt-r}, uncompute $r$ and discard the first register. This produces the state
\begin{equation}
    \label{equ:y-ker-uncmp}
    \ket{\psi_y} = \frac{1}{\sqrt{C}} \sum_{r : r \cdot y = 0} \ket{\phi_r},
\end{equation}
for an appropriate normalization constant $C$. The above algorithm is summarized as follows.

\begin{algorithmic}[1]
    \Require A serial number $y \in \F_2^n$
    \Ensure A bolt $\ket{\bolt{y}'}$
    \State Compute a set $G$ of generators for the kernel linear map $r \mapsto r \cdot y$.
    \State\label{step:bolt-gen-ker}Using $G$ prepare the superposition \eqref{equ:sup-y-ker}.
    \State Compute $\ket{\phi_r}$ into another register to obtain the state \eqref{equ:y-ker-phi}.
    \State Uncompute $r$ and discard the first register to obtain the state \eqref{equ:y-ker-uncmp}.
    \State Compute $f_\A$ into an extra register to obtain the state \eqref{equ:0-y-bolt-1}.
    \State Measure the second register. If the measurement outcome is equal to $y$ then return the post-measurement state. Otherwise, go to Step \ref{step:bolt-gen-ker}.
\end{algorithmic}


\section{Quantum Money from Quaternion Algebras}

In this section, we investigate the security of the quantum money scheme proposed in \cite{kane2021quantum}, which is based on quaternion algebras. The hard problem underlying this scheme is copying a quantum state that is encoding an eigenform of the Hecke operators. We give a reduction from this problem to a linear algebra problem, namely inverting a matrix with entries the representation numbers of quaternary quadratic forms. We believe that the latter problem is more suitable, and much more accessible, for future cryptanalysis.

Informally, the idea of our reduction is as follows. Suppose the quantum state to be copied is given by $\ket{\phi} = \sum \alpha_I \ket{I}$ where the sum is over the set $\{ I \}_{I \in \cl(\O)}$ of representatives of the left ideal classes of a maximal order $\O$ in a quaternion algebra. Let $M$ be the formal abelian group with basis $\cl(\O)$. For a fixed $J \in \cl(\O)$, based on a morphism from $M$ to the space of modular forms we obtain the identity
\[ f(q) = \sum_{I \in \cl(\O)} \frac{\alpha_I}{\alpha_J} \Theta_{I, J}(q), \]
where the left hand side is the modular form corresponding to the formal sum $\sum_{I \in \cl(\O)} \alpha_I [I]$, and the $\Theta_{I, J}$ are theta series corresponding to the ideals $I, J$. By equating the coefficients of different powers of $q$ in both sides of the above identity, we form a system of linear equations. The entries of the coefficient matrix $A$ associated with this system are the representation numbers from the theta series $\Theta_{I, J}$. The dimension of the system is equal to the size of the set $\cl(\O)$ which is exponentially large. We show that if one is able to approximate the operation $A^{-1}$ then one can make arbitrary many copies of $\ket{\phi}$.

\subsection{Background}

In this section, we review the minimal necessary background in modular forms and quaternion algebras, and set up some notations, for the following sections. Our main references for modular forms are \cite{diamond2005first, lang2012introduction, stein2007modular}. For a comprehensive treatment of quaternion algebras we refer the reader to \cite{voight2021quaternion}.

\subsubsection{Modular forms.}

Let $\H = \{ z \in \C : \mathrm{Im}(z) > 0 \}$ be the complex upper half plane, and let $\mathrm{SL}_2(\Z)$ be the special linear group over the integers. For $\gamma = \begin{bsmallmatrix} a & b \\ c & d \end{bsmallmatrix} \in \mathrm{SL}_2(\Z)$ and $z \in \H$ define $\gamma z = (az + b) / (cz + d)$, which defines a group action on $\H$. For any integer $N > 0$ denote by $\Gamma(N)$ the kernel of the projection $\mathrm{SL}_2(\Z) \rightarrow \mathrm{SL}_2(\Z / N\Z)$. A congruence subgroup of $\mathrm{SL}_2(\Z)$ of level $N$ is any subgroup $\Gamma$ containing $\Gamma(N)$. In this paper, we are particularly interested in the congruence subgroup
\[ \Gamma_0(N) = \left\{ \begin{bmatrix} a & b \\ c & d \end{bmatrix} \in \mathrm{SL}_2(\Z) : \begin{bmatrix} a & b \\ c & d \end{bmatrix} \equiv \begin{bmatrix} * & * \\ 0 & * \end{bmatrix} \pmod{N} \right\}. \] 
The weight-$k$ operator $[\gamma]_k$ on the functions $f \in \hom(\H, \C)$ is defined by $(f[\gamma]_k)(\tau) = (cz + d)^{-k} f(\gamma z)$. This is also a group action on $\hom(\H, \C)$ since $f [\gamma_1 \gamma_2]_k = (f [\gamma_1]_k) [\gamma_2]_k$. 

A weakly modular function of weight $k$ with respect to a congruence subgroup $\Gamma$ is a meromorphic function $f : \H \rightarrow \C$ such that $f[\gamma]_k = f$ for all $\gamma \in \Gamma$. Define the extended upper half plane $\H^*$ by adding the set of rational points of the projective line to $\H$, i.e., $\H^* = \H \cup \P^1(\Q)$. A \textit{modular form} of weight $k$ with respect to $\Gamma$ is a weakly modular function $f : \H \rightarrow \C$ that is holomorphic on $\H^*$. Every modular form $f$ has a Fourier expansion
\[ f(z) = \sum_{n = 0}^\infty a_n q^n, \quad q(z) = e^{2\pi iz} \]
called the $q$-expansion of $f$. If $a_0 = 0$ in the $q$-expansion of $f$, we say that $f$ is a \textit{cuspform}.

\subsubsection{Quaternion algebras.}

Let $F$ be a field of characteristic $\ne 2$. An $F$-algebra $B$ is called a quaternion algebra if $B$ has a basis $1, i, j, k$ as an $F$-vector space such that $i^2 = a, j^2 = b$ and $k = ij = -ji$ for some $a, b \in F^\times$. There is an involution operation $^-: B \rightarrow B$ defined by $t + xi + yj + zk \mapsto t - xi - yj - zk$. The reduced trace $\mathrm{trd}: B \rightarrow F$ and the reduced norm $\mathrm{nrd}: B \rightarrow F$ are defined by $\trd(\alpha) = \alpha + \overline{\alpha}$ and $\nrd(\alpha) = \alpha\overline{\alpha}$. For $F = \Q$, a quaternion algebra $B$ is said to be ramified at a prime $p$ if the completion $B \otimes_\Q \Q_p$ is a division ring, otherwise $B$ is unramified at $p$. In this paper, we are interested in the quaternion algebra $B$ over $\Q$, denoted by $\quat$, that is ramified at a single primes $p$ and at $\infty$. 

A $\Z$-lattice $I \subseteq \quat$ of rank $4$ is called a fractional ideal. A $\Z$-order (or simply an order) $\O \subset \quat$ is a fractional ideal that is also a ring. A maximal order is an order that is not properly contained in another order. For a fractional ideal $I \subseteq \quat$ define 
\[ \O_L(I) = \{ \alpha \in \quat : \alpha I \subseteq I \}. \]
Then $\O_L(I)$ is also an order and is called the left order of $I$. The right order of $I$ is defined similarly by $\O_R(I) = \{ \alpha \in \quat : I \alpha \subseteq I \}$. An ideal $I$ is invertible if there is another ideal $I'$ such that $II' = \O_L(I) = \O_R(I')$ and $I'I = \O_L(I') = \O_R(I)$. We denote the inverse of $I$, if it exists, by $I^{-1}$. The reduced norm of an ideal $I$ is a $\Z$-submodule of $\Q$ generated by the set $\{ \nrd(\alpha) : \alpha \in I \}$. The inverse of $I$ can also be written as $I^{-1} = \overline{I} \nrd(I)^{-1}$ where $\overline{I} = \{ \overline{\alpha} : \alpha \in I \}$ is the involution of $I$. For an order $\O \subseteq \quat$, A left fractional $\O$-ideal is a fractional ideal $I$ such that $\O \subseteq \O_L(I)$. Right fractional $\O$-ideals are defined similarly. 

Two fractional ideals $I, J$ are said to be in the same right class if $I = \alpha J$ for some $\alpha \in \quat^\times$, in which case we write $I \sim_R J$. The relation $\sim_R$ defines an equivalence relation on the set of fractional ideals, and a class of an ideal $I$ is denoted by $[I]_R$. Left equivalence classes are defined similarly. For an order $\O \subset \quat$, the right class set of $\O$ is defined as the isomorphism classes of invertible right $\O$-ideals, i.e.,  
\[ \cl_R(\O) = \{ [I]_R : I \text{ an invertible } \O\text{-ideal} \}. \]
Since the mapping $I \mapsto \overline{I}$ defines a bijection between left and right classes we often simply write $\cl(\O)$ instead of $\cl_R(\O)$.

\subsubsection{Hecke operators.}

Denote by $M_k(\Gamma_0(N))$ the space of modular forms of weight $k$ with respect to the subgroup $\Gamma_0(N)$, and denote by $S_k(\Gamma_0(N))$ the space of cuspforms of weight $k$ with respect to $\Gamma_0(N)$. For an integer $n$ with $(n, N) = 1$, the $n$-th Hecke operator on $M_k(\Gamma_0(N))$ is the linear operator 
\[
\begin{array}{rrll}
    T_n : & M_k(\Gamma_0(N)) & \longrightarrow & M_k(\Gamma_0(N)) \\
    & f(z) & \longmapsto & \displaystyle \frac{1}{n^{k - 1}} \sum_{\substack{ad = n \\ 0 \le b < d}} \frac{1}{d^k} f\left( \frac{az + b}{d} \right)
\end{array}
\]
We will always assume that $(n, N) = 1$, since this is what we are interested in, and also the theory is a bit simpler in this case. From the definition, we see that $T_n$ preserves the subspace $S_k(\Gamma_0(N))$. The Hecke operators for $M_k(\Gamma_0(N))$ satisfy 
\begin{equation}
    \label{equ:hecke-rec}
    \begin{aligned}
        T_{mn} & = T_m T_n && (m, n) = 1, \\
        T_{p^{r + 1}} & = T_{p^r}T_p - p^{k - 1}T_{p^{r - 1}} && p \text{ prime}.
    \end{aligned}
\end{equation}
For a modular form $f$, let $a_n(f)$ be the $n$-th coefficient of the $q$-expantion of $f$. Then for $f \in M_k(\Gamma_0(N))$ and prime $p$ we have
\begin{equation}
    a_n(T_p f) = 
    \begin{cases}
        a_{np}(f) + p^{k - 1}a_{n / p}(f) & \text{if } p \mid n, \\
        a_{np}(f) & \text{if } p \nmid n.
    \end{cases}
\end{equation}
It follows that for coprime integers $m, n$ we have $a_m(T_n f) = a_{mn}(f)$, and in particular, $a_1(T_n f) = a_n(f)$. Define $\T := \Z[\{ T_n \}_{n \ge 1}]$, the algebra generated by all the Hecke operators. Then $\T$ is a commutative ring called the \textit{Hecke algebra}. Since $\T$ is commutative, there are elements $f \in S_k(\Gamma_0(N))$ that are simultaneous eigenvectors for all element in $\T$. In particular, for any such $f$ we have $T_n f = \lambda_n f$ for some $\lambda_n \in \C^\times$. We refer to such $f$ as an \textit{eigenfunction}. If $f$ is normalized, i.e., $a_1(f) = 1$, it is called an \textit{eigenform}. The coefficient of the $q$-expansion of an eigenform satisfy important identities induced by \eqref{equ:hecke-rec}. More precisely, let $f = \sum_n a_nq^n$ be an eigenform, and assume $T_n f = \lambda_n f$. Then we have $a_n = \lambda_n$, and using \eqref{equ:hecke-rec} we obtain
\begin{equation}
    \label{equ:eigenform-coeff}
    \begin{aligned}
        a_{mn} & = a_m a_n && (m, n) = 1, \\
        a_{p^{r + 1}} & = a_{p^r}a_p - p^{k - 1}a_{p^{r - 1}} && p \text{ prime}.
    \end{aligned}
\end{equation}
Let $d\mu(z) = dx dy / y^2$, $z = x + iy \in \H$ be the hyperbolic measure, and let $V_{\Gamma_0(N)}$ be volume of the modular curve $X(\Gamma_0(N)))$ with respect to $d\mu(z)$. The space $S_k(\Gamma_0(N))$ can be made into an inner product space using the Petersson inner product
\[ \lrang{f, g} = \frac{1}{V_{\Gamma_0(N)}} \int_{X(\Gamma_0(N))} f(z) \overline{g(z)} (\mathrm{Im}(z))^k d\mu(z). \]
The Hecke operators $T_n$ are Hermitian with respect to this inner product, i.e., $\lrang{T_n f, g} = \lrang{f, T_ng}$. This means there exists a basis of $S_k(\Gamma_0(N))$ consisting of Hecke eigenforms.

One can also define a set of Hecke operators on ideal classes of a quaternion order as follows. Let $\O \subset \quat$ be an order and let $\cl(\O)$ be the (right) class set of $\O$. Define the formal divisor group 
\begin{equation}
    \label{equ:div-grp}
    M = \bigoplus_{I \in \cl(\O)} \Z.[I] 
\end{equation}
For an integer $n > 0$, $n \ne p$, the $n$-th Hecke operator on $M$ is defined by
\[
\begin{array}{rrll}
    T_n : & M & \longrightarrow & M \\
    & [I] & \longmapsto & \displaystyle \sum_{\substack{J \subseteq I \\ \nrd(JI^{-1}) = n}} [J],
\end{array}
\]
where the sum is over all invertible right $\O$-ideals contained in $I$. In the basis of the ideal classes in $\cl(\O)$, the Hecke operator $T_n$ is represented by the \textit{Brandt matrix} $B(n)$ which can be explicitly computed as follows. Let $h = \#\cl(\O)$ be the class number of $\O$, and fix a set $I_1, I_2, \dots, I_h$ of representatives of the ideal classes in $\cl(\O)$. Let $\O_i = \O_L(I_i)$, and let $w_i$ be the number of units in $\O_i$. The $n$-th Brandt matrix is defined by
\begin{equation}
    \label{equ:brandt-matrix}
    B(n)_{ij} = \frac{1}{w_i} \# \{ \alpha \in I_jI_i^{-1} : \nrd(\alpha) = n \nrd(I_jI_i^{-1}) \}.
\end{equation}
The Hecke operators $T_n$ also satisfy the relations \eqref{equ:hecke-rec}. Moreover, we can define an inner product on $M$ with respect to which the $T_n$ are Hermitian: $\lrang{[I_i], [I_j]} = \delta_{i,j} w_i / 2$ where $\delta_{i, j} = 1$ if $i = j$, and $\delta_{i, j} = 0$ otherwise. We will also call this inner product the Petersson inner product. Similar to the case of modular forms, there is a set of elements of $M_\R = M \otimes_\Z \R$ that are simultaneous eigenvectors for all Hecke operators. We also call these element Hecke eigenforms. The element
\[ \phi_E = \sum_{j = 1}^h 2w_j^{-1} [I_j] \]
of $M_\R$ is called the Eisenstein eigenform. We have $T_n \phi_E = \sigma'(n)$, where $\sigma'(n) = \sum_{d \mid n, (p, d) = 1} d$. The sub-$\T$-module of $M_\R$ orthogonal to $\phi_E$ is called the space of cuspforms. For any cuspform $\phi = \sum_{j = 1}^h \alpha_j [I_j]$ we have $\sum_{j = 1}^h \alpha_j = 0$.

\subsection{The quantum money scheme}
\label{sec:mod-mny-scheme}

Let us briefly review the quantum money scheme presented in \cite{kane2021quantum}. The parameters of the scheme are fixed as follows:
\begin{itemize}
    \item A prime $p \in O(2^\kappa)$ where $\kappa$ is the security parameter.
    \item The quaternion algebra $\quat$ over $\Q$ ramified at $p$ and $\infty$.
    \item A maximal order $\O \subset \quat$ of discriminant $p$.
    \item A set of Hecke operators $T_{n_j}$ where $n_j = \poly(\kappa)$, $1 \le j \le s$ and $s = \poly(\kappa)$.
\end{itemize}
Let $I_1, I_2, \dots, I_d$ be distinct representatives of the ideal classes in $\cl(\O)$, where $d = \#\cl(\O)$. In the above specific setup we have $d = \lfloor p / 12 \rfloor$. Let $M$ be the formal divisor group on $\cl(\O)$ defined in \eqref{equ:div-grp}. For simplicity, we will let $M_\C$ denote $M \otimes_{\Z} \C$. Let $\{ \phi_i \}_{1 \le i \le d}$ be a set of eigenforms in $M_\C$ that are a basis of $M_\C$, and let $\{ \ket{\phi_i} \}_{1 \le i \le d}$ be the corresponding quantum states. This means that for $\phi_i = \sum_i \alpha_i [I_i]$, we have
\[ \ket{\phi_i} = \frac{1}{\sqrt{\alpha}} \sum_{j = 1}^d \alpha_j \ket{I_j} \]
where $\alpha^2 = \sum_j \abs{\alpha_j}^2$. 
A banknote in the quantum money scheme consists of
\begin{enumerate}
    \item A money state of the form $\ket{\phi_i}\ket{\phi_i}$ for some $1 \le i \le d$,
    \item A serial number, which is a tuple $(b_j)_{1 \le j \le s}$ of approximate eigenvalues of the $T_{n_j}$ corresponding to the eigenform $\phi_i$.
    \item A classical digital signature of the serial number $(b_j)_{1 \le j \le s}$.
\end{enumerate}
The verification of a banknote is done by
\begin{enumerate}
    \item Verifying the signature of the serial number
    \item Using phase estimation on the operators $e^{iT_{n_j}}$ and the eigenstate $\ket{\phi_i}$ to recover an approximate serial number and compare against $(b_j)_{1 \le j \le s}$.
\end{enumerate}

It takes more than a few pages to give a precise account on the implementation of the above scheme. Here, we briefly explain two main points. 
\begin{description}[leftmargin = 0pt, font = \normalfont\itshape]
\item [Hamiltonian simulation.] To prepare an eigenstate $\ket{\phi_i}$ we start from the uniform superposition $\ket{\phi} = \sum_{I \in \cl(\O)} \ket{I}$ (normalization omitted). Then we use phase estimation on the operators $U_j = e^{iT_{n_j}}$ to project onto an eigenstate $\ket{\phi_i}$. Since the matrices of the operators $T_{n_j}$ are sparse, standard Hamiltonian simulation techniques can be used to implement the unitaries $U_j$. However, for an integer $t > 0$, the cost of simulating $U_j^t = e^{iT_{n_j}t}$ depends linearly on $t$. More precisely, the simulation cost depends linearly on $st \opnorm{T_{n_j}}_{\max}$ \cite{berry2015hamiltonian}, where $s$ is the sparsity degree of $T_{n_j}$, i.e., the maximum number of nonzero entries in any row, and $\opnorm{T_{n_j}}_{\max}$ is the largest entry of $T_{n_j}$ in absolute value. This means, using phase estimation, we can approximate the eigenvalues of the $T_{n_j}$ only to accuracy $1 / \poly(\kappa)$. 
\item [Projecting onto an eigenstate.] To be able to project onto a single eigenstate, we cannot use only one of the $T_{n_j}$, since in that case we would need an exponentially accurate phase estimation. However, if we run a polynomially accurate phase estimation for all the $T_{n_j}$, we might be able to project onto a random eigenstate $\ket{\phi_i}$. For this to hold, one needs to make some heuristic assumptions on the distribution of the eigenvalues of the Heck operators. Such distributions are already studied in the literature \cite{gun2008summation, murty2009effective, chow2015distinguishing}, but a formal treatment of the running time of the above projection based on these distributions is nontrivial. 
\end{description}

\subsection{Cryptanalysis}

In this section, we analyze the assumption behind the scheme in Section \ref{sec:mod-mny-scheme}. We use the connection between quaternion orders and modular forms to translate the problem to the space of modular forms. The connection is provided by the theory of theta series.

\subsubsection{From quaternions to modular forms.}

Let $m = 2k > 0$ be an integer and let $Q: \Z^m \rightarrow \Z$ be an integral positive definite quadratic form. The \textit{theta series} associated to $Q$ is a function $\Theta_Q: \H \rightarrow \C$ defined by
\[ \Theta_Q(z) = \sum_{x \in \Z^m} q^{Q(x)} = \sum_{n = 0}^\infty r_Q(n) q^n, \quad q = e^{2 \pi i z}. \]
The integers $r_Q(n) = \# \{ x \in \Z^m : Q(x) = n \}$ are called the \textit{representation numbers} of $Q$. Now, let $\O \subset \quat$ be a maximal order as in Section \ref{sec:mod-mny-scheme}. For any two ideals $I_i, I_j \subseteq \O$ define the mapping
\begin{equation}
    \label{equ:quad-form}
    \begin{array}{rrll}
        Q_{i, j} : & I_jI_i^{-1} & \longrightarrow & \Z \\
        & \alpha & \longmapsto & \displaystyle \nrd(\alpha) \nrd(I_iI_j^{-1})
    \end{array}
\end{equation}
This is a positive definite quadratic form. Using these quadratic forms we can define the morphism
\[
\begin{array}{rrll}
    \Theta : & M_\C \times M_\C & \longrightarrow & M_2(\Gamma_0(p)) \\
    & ([I_i], [I_j]) & \longmapsto & \displaystyle \sum_{\alpha \in I_jI_i^{-1}} q^{Q_{i, j}(\alpha)} = 1 + 2\sum_{n = 1}^\infty \lrang{T_n [I_i], [I_j]} q^n,
\end{array}
\]
where $q = e^{2 \pi i z}$, $z \in \H$. The last equality follows from the definition of Brandt matrices \eqref{equ:brandt-matrix}. Eichler \cite{eichler1973basis} proved that the spaces $M_\C$ and $M_2(\Gamma_0(p))$ are isomorphic as Hecke modules, see also \cite{kohel1999computing, kohel2001hecke}. More precisely, 
\begin{theorem}[Eichler]
    \label{thm:eichler-basis}
    The map $T_n \mapsto T_n$ of Hecke operators over $M_\C$ and $M_2(\Gamma_0(p))$ defines an isomorphism of Hecke algebras. Moreover, the morphism $\Theta$ defines a nondegenrate Hecke bilinear map,
    \[ \Theta(T_n [I_i], [I_j]) = \Theta([I_i], T_n [I_j]) = T_n \Theta([I_i], [I_j]), \]
    such that the traces of $T_n$ on $M_\C$ and $T_n$ on $M_2(\Gamma_0(p))$ agree.
\end{theorem}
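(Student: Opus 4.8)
The plan is to establish Eichler's theorem in two pieces: the Hecke-algebra isomorphism $T_n \mapsto T_n$, and the properties of the bilinear map $\Theta$. The conceptual engine behind both is the Eichler–Brandt correspondence, which identifies the action of Hecke operators on $M_\C$ (via Brandt matrices) with the classical Hecke action on $M_2(\Gamma_0(p))$. I would begin by recording the key numerical identity already written into the definition of $\Theta$, namely
\[ \Theta([I_i], [I_j]) = 1 + 2\sum_{n = 1}^\infty \lrang{T_n [I_i], [I_j]} q^n, \]
so that the $n$-th Fourier coefficient of $\Theta([I_i],[I_j])$ is (up to the constant factor $2$) the $(i,j)$ entry $B(n)_{ij}$ of the Brandt matrix, weighted by the Petersson pairing. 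This converts every statement about modular forms into a statement about Brandt matrices.

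First I would prove the bilinearity and the Hecke-equivariance $\Theta(T_n[I_i],[I_j]) = T_n\,\Theta([I_i],[I_j])$. On the left, expanding $T_m[I_i]$ in the ideal-class basis and using the coefficient formula reduces the claim to the multiplicativity relation between Brandt matrices, i.e. that the $B(n)$ satisfy the same recursions \eqref{equ:hecke-rec} as the classical Hecke operators; comparing the $n$-th Fourier coefficient of each side then amounts to the associativity identity $\lrang{T_nT_m[I_i],[I_j]} = \lrang{T_m[I_i],T_n[I_j]}$, which follows from the self-adjointness of the $T_n$ under the Petersson pairing $\lrang{[I_i],[I_j]} = \delta_{ij} w_i/2$ recorded earlier. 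The symmetry $\Theta(T_n[I_i],[I_j]) = \Theta([I_i],T_n[I_j])$ is the same self-adjointness read off coefficientwise. That the classical operator $T_n$ acting on the right-hand modular form matches the combinatorial $T_n$ on $M_\C$ is precisely the content of the Eichler–Brandt theory of theta series, and I would cite \cite{eichler1973basis, kohel1999computing, kohel2001hecke} for the verification that these theta series are genuine weight-$2$ forms on $\Gamma_0(p)$ and transform correctly under $T_n$.

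Next I would address nondegeneracy and the resulting Hecke-algebra isomorphism. Nondegeneracy means that the Gram matrix of pairings $\bigl(\Theta([I_i],[I_j])\bigr)_{i,j}$ — equivalently the map sending a formal divisor to its associated theta series — has trivial kernel, so that $\Theta$ induces an injection $M_\C \hookrightarrow M_2(\Gamma_0(p))$. Here I would invoke the dimension match $d = \#\cl(\O) = \lfloor p/12\rfloor = \dim M_2(\Gamma_0(p))$, which upgrades the injection to an isomorphism of vector spaces; since the map intertwines the two Hecke actions by the equivariance just proved, it is an isomorphism of Hecke modules, and the induced map $T_n \mapsto T_n$ is an isomorphism of the Hecke algebras. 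The equality of traces is then immediate: the trace of $T_n$ is a basis-independent invariant, and an isomorphism of Hecke modules carries the operator $T_n$ on $M_\C$ to the operator $T_n$ on $M_2(\Gamma_0(p))$, so their traces coincide.

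The main obstacle is the nondegeneracy of $\Theta$, i.e. the genuine injectivity of the theta map. The Hecke-equivariance is formal bookkeeping with the coefficient formula and self-adjointness, but the linear independence of the theta series $\Theta([I_i],[I_j])$ — the statement that distinct ideal classes produce modular forms whose $q$-expansions are linearly independent — is the substantive arithmetic input, and is exactly where Eichler's original argument (via the basis problem and the surjectivity of the theta lift onto the whole space $M_2(\Gamma_0(p))$) does the real work. Rather than reprove this, I would treat it as the cited theorem of Eichler and emphasize that the dimension coincidence $d = \dim M_2(\Gamma_0(p))$ is what makes the nondegenerate bilinear map an isomorphism, leaving the deep step safely inside the reference.
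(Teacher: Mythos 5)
The paper gives no proof of this theorem at all: it is stated as a classical result of Eichler and justified entirely by the citations to \cite{eichler1973basis, kohel1999computing, kohel2001hecke}, so your proposal cannot conflict with the paper's argument, and your division of labour --- formal bookkeeping (bilinearity, Hecke-equivariance via self-adjointness of the $T_n$ under the Petersson pairing, equality of traces from the module isomorphism) done by hand, with the substantive arithmetic input (the basis problem / injectivity of the theta lift) left inside the reference --- matches the standard treatment and is a reasonable expansion of what the paper leaves implicit. One imprecision is worth flagging: nondegeneracy of the pairing is \emph{not} the same as injectivity of a single fixed-argument map $\Theta([I_k], -): M_\C \rightarrow M_2(\Gamma_0(p))$. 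The paper itself points out later that $\dim_\C \Theta([I_k], M_\C) = \abs{\Sigma(k)}$, which can be strictly smaller than $d$, so the injection $M_\C \hookrightarrow M_2(\Gamma_0(p))$ that your dimension count upgrades to an isomorphism cannot in general be realized by fixing one ideal class; nondegeneracy only guarantees that for each nonzero $\phi$ \emph{some} $[I_k]$ satisfies $\Theta([I_k], \phi) \neq 0$, and assembling a genuine Hecke-module injection from the full pairing is again part of what the cited theorem is carrying. As long as you present that step as quoted from Eichler rather than as a consequence of your Gram-matrix remark, the sketch is sound.
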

If we fix one of the arguments of $\Theta$ we get a homomorphism $M_\C \rightarrow M_2(\Gamma_0(p))$. The following lemma explain what the image of $\Theta$ is for some specific arguments.
\begin{lemma}
    \label{lem:eigen-coeff}
    Let $\phi = \sum_{j = 1}^d \alpha_j [I_j]$ be an eigenform. Then for any integer $1 \le k \le d$ the modular form $\Theta([I_k], \phi)$ is a Hecke eigenfunction. Moreover, $\Theta([I_k], \phi) \ne 0$ if and only if $\alpha_k \ne 0$.
\end{lemma}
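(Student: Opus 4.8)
The plan is to work entirely with the $q$-expansion of $\Theta([I_k],\phi)$. Expanding the second argument by bilinearity of $\Theta$ gives $\Theta([I_k],\phi)=\sum_{j=1}^d\alpha_j\,\Theta([I_k],[I_j])$, and the explicit expansion $\Theta([I_i],[I_j])=1+2\sum_{n\ge1}\langle T_n[I_i],[I_j]\rangle q^n$ lets me read off every Fourier coefficient of $\Theta([I_k],\phi)$ in terms of the coordinates $\alpha_j$.

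For the eigenfunction claim I would invoke Eichler's theorem (Theorem \ref{thm:eichler-basis}) directly: since $\Theta$ is a Hecke bilinear map and $T_n\phi=\lambda_n\phi$ for all $n$, I get $T_n\,\Theta([I_k],\phi)=\Theta([I_k],T_n\phi)=\lambda_n\,\Theta([I_k],\phi)$ for every $n$. Thus $\Theta([I_k],\phi)$ is a simultaneous eigenvector of all the Hecke operators with the same eigenvalue system $(\lambda_n)$ as $\phi$; in particular, whenever it is nonzero it is a Hecke eigenfunction.

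For the nonvanishing criterion I would compute coefficients. For $n\ge1$ the $n$-th Fourier coefficient of $\Theta([I_k],\phi)$ is $2\langle T_n[I_k],\phi\rangle$, and using that the $T_n$ are self-adjoint for the Petersson pairing this equals $2\lambda_n\langle[I_k],\phi\rangle$. Since $\langle[I_k],[I_j]\rangle=\delta_{kj}w_k/2$, only the $j=k$ term survives and $\langle[I_k],\phi\rangle=\alpha_k w_k/2$, so the $n$-th coefficient is $\lambda_n\alpha_k w_k$. Taking $n=1$ and using $T_1=\mathrm{id}$ (so $\lambda_1=1$), the first coefficient is exactly $\alpha_k w_k$. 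As $w_k>0$, if $\alpha_k\ne0$ this coefficient is nonzero and hence $\Theta([I_k],\phi)\ne0$.

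It remains to treat the converse, and this is the step I expect to be the only real subtlety. If $\alpha_k=0$ then the computation above kills every coefficient with $n\ge1$, so the only thing that could survive is the constant term $a_0=\sum_{j}\alpha_j=\langle\phi,\phi_E\rangle$. Here I would use the eigenform structure: if $\phi$ were proportional to the Eisenstein eigenform $\phi_E=\sum_j 2w_j^{-1}[I_j]$, all of its coordinates would be nonzero, contradicting $\alpha_k=0$; hence $\phi$ is a cuspform and so $\sum_j\alpha_j=0$, forcing $a_0=0$ as well. (Equivalently, a Hecke eigenform not proportional to $\phi_E$ must differ from it in some eigenvalue $\lambda_n\ne\sigma'(n)$, and self-adjointness of $T_n$ then yields $\langle\phi,\phi_E\rangle=0$.) Therefore $\Theta([I_k],\phi)=0$ when $\alpha_k=0$, which completes the equivalence. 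The main obstacle is thus not the eigenfunction property, which is essentially formal from Eichler's theorem, but verifying that the constant term also vanishes in the $\alpha_k=0$ case, which forces one to separate the Eisenstein and cuspidal parts of the eigenform.
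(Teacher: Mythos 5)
Your proof is correct and follows essentially the same route as the paper's: the eigenfunction claim via the Hecke-bilinearity of $\Theta$ from Eichler's theorem, and the nonvanishing criterion via the $q$-expansion $\sum_j\alpha_j + 2\sum_{n\ge1}\lrang{T_n[I_k],\phi}q^n$ together with self-adjointness of $T_n$ and the Eisenstein/cuspidal case split for the constant term. Your extraction of the $n=1$ coefficient $\alpha_k w_k$ (using $T_1=\mathrm{id}$) is a slightly more explicit justification of the ``$\alpha_k\ne0\Rightarrow\Theta([I_k],\phi)\ne0$'' direction than the paper gives, but it is the same argument.
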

\begin{proof}
    The first part follows from Theorem \ref{thm:eichler-basis}. More precisely, for any $k$ and $n$
    \[ T_n \Theta([I_k], \phi) = \Theta([I_k], T_n \phi) = \Theta([I_k], \lambda_n \phi) = \lambda_n \Theta([I_k], \phi). \]
    For the second part, expanding $\Theta([I_k], \phi)$ gives
    \begin{align}
        \Theta([I_k], \phi) 
        & = \Theta\bigg([I_k], \sum_{j = 1}^d \alpha_j [I_j] \bigg) \nonumber \\
        & = \sum_{j = 1}^d \alpha_j \Theta([I_k], [I_j]) && (\text{by linearity of } \Theta) \nonumber \\
        & = \sum_{j = 1}^d \alpha_j \Big( 1 + 2\sum_{n = 1}^\infty \lrang{T_n [I_k], [I_j]} q^n \Big) && (\text{by definition}) \nonumber \\
        & = \sum_{j = 1}^d \alpha_j + 2 \sum_{n = 1}^\infty \lrang{T_n [I_k], \phi} q^n && (\text{by linearity of } \lrang{\,,}) \label{equ:theta-expand}
    \end{align}
    If $\phi$ is the Eisenstein eigenform $\phi_E$ then $\alpha_k \ne 0$ and $\lrang{T_n [I_k], \phi} = \sigma'(n)\alpha_k \ne 0$, so $\Theta([I_k], \phi) \ne 0$ and the statement is true. So assume that $\phi$ is a cuspform. If $\alpha_k = 0$ then 
    \begin{equation}
        \label{equ:bilin-zero}
        \lrang{T_n [I_k], \phi} = \lrang{[I_k], T_n \phi} = \lambda_n \lrang{[I_k], \phi} = \lambda_n \alpha_k = 0
    \end{equation}
    for all $n \ge 1$. It follows from \eqref{equ:theta-expand} and \eqref{equ:bilin-zero} that $\Theta([I_k], \phi) = 0$. If $\alpha_k \ne 0$ then $\lrang{[I_k], \phi} \ne 0$ and again we see from \eqref{equ:theta-expand} that $\Theta([I_k], \phi) \ne 0$.
\end{proof}
A direct consequence of Lemma \ref{lem:eigen-coeff} is that the dimension of the image of the homomorphism $\Theta([I_k], -): M_\C \rightarrow M_2(\Gamma_0(p))$ depends on the coefficients $\lrang{[I_k], \phi_j}$ being zero or not for different eigenforms $\phi_j$. In fact, using the technique in \cite{zbMATH07132848}, one can prove that
\[ \dim_\C \Theta([I_k], M_\C) = \abs{\Sigma(k)} \]
where $\Sigma(k) = \{ j : \lrang{[I_k], \phi_j} \ne 0 \}$. Therefore, one would expect that for a random $1 \le k \le d$, $\lrang{[I_k], \phi} \ne 0$ with high probability, or equivalently, that $\dim_\C \Theta([I_k], M_\C)$ is not too small. We now outline our reduction.

\subsubsection{The reduction}

Suppose we are given the quantum state
\[ \ket{\phi} = \frac{1}{\sqrt{\alpha}} \sum_{j = 1}^d \alpha_j \ket{I_j} \] 
which represents a random eigenform $\phi = \sum_{j = 1}^d \alpha_j [I_j]$. Since $\phi_j$ is random, it is a cuspform with overwhelming probability. For a random $k$, we assume, by the above remarks, that $\alpha_k \ne 0$. Let $f(q) = \Theta([I_k], \phi) / \alpha_k \in M_2(\Gamma_0(p))$. Then, by Lemma \ref{lem:eigen-coeff}, $f(q)$ is a Hecke eigenfunction. A closer look at \eqref{equ:theta-expand} shows that $f(q)$ is in fact an eigenform with eigenvalues the same as the eigenvalues of $\phi$. Therefore, we have the identity
\begin{equation}
    \label{equ:eigenform-idty}
    f(q) = \sum_{j = 1}^d \frac{\alpha_j}{\alpha_k} \Theta([I_k], [I_j])
\end{equation}
of modular forms. Let $f(q) = \sum_{n = 1}^\infty a_n q^n$ so that $a_1, a_2, \dots$ are the eigenvalues of $f$. The idea is to build a system of linear equations, where $\alpha_j$ are the unknowns, by equating the coefficients of $q^n$ on both sides of \eqref{equ:eigenform-idty} for different values of $n$. For any $n \ge 1$ we get an equation
\begin{equation}
    \label{equ:eigen-equ-lin}
    a_n = \frac{\alpha_1}{\alpha_k} \lrang{T_n [I_k], [I_1]} + \frac{\alpha_2}{\alpha_k} \lrang{T_n [I_k], [I_2]} + \cdots + \frac{\alpha_d}{\alpha_k} \lrang{T_n [I_k], [I_d]}.
\end{equation}
Let $n_1, n_2, \dots, n_d$ be a set of positive integers. Gathering linear equations of the form \eqref{equ:eigen-equ-lin} for all the $n_i$ we obtain a system
\begin{equation}
    \label{equ:lin-sys}
    A \bm{\alpha} = \bm{a}
\end{equation}
where $\bm{a}$ is the column vector $[a_{n_1}, a_{n_2}, \dots, a_{n_d}]^T$, $\bm{\alpha}$ is the column vector $[\alpha_1 / \alpha_k, \alpha_2 / \alpha_2, \dots, \alpha_d / \alpha_k]^T$ and the matrix $A = (\lrang{T_{n_i} [I_k], [I_j]})_{i, j}$. The quantum state representing the vectors $\bm{a}$ and $\bm{\alpha}$ are 
\begin{align*}
    \ket{\bm{a}} & = \frac{1}{\sqrt{\sum_{j = 1}^d a_{n_j}^2}} \sum_{j = 1}^d a_{n_j} \ket{j} \text{ and} \\
    \ket{\bm{\alpha}} & =\frac{1}{\sqrt{\sum_{j = 1}^d \alpha_j^2 / \alpha_k^2}} \sum_{j = 1}^d \frac{\alpha_j}{\alpha_k} \ket{I_j} = \frac{1}{\sqrt{\alpha}} \sum_{j = 1}^d \alpha_j \ket{I_j} = \ket{\phi}, 
\end{align*}
respectively. Suppose there is a quantum algorithm $\mathcal{A}$ that can efficiently approximate the operation $A^{-1}$. Then one could compute an approximate copy of $\ket{\phi}$ as $\ket{\phi} = A^{-1} \ket{\bm{a}}$. Therefore, assuming access to $\mathcal{A}$, the problem is reduced to the following
\begin{problem}
    \label{prb:init-state}
    Generate (an approximation of) the state $\ket{\bm{a}}$ for an appropriate choice of distinct integers $n_1, n_2, \dots, n_d$.
\end{problem}
The main obstacle to solving Problem \ref{prb:init-state} is that since $d$ is exponentially large, the $n_i$ will be exponentially large as $i$ becomes large regardless of which set of the $n_i$ we choose. Recall that $a_{n}$ is an eigenvalue of $\ket{\phi}$ for any $n \ge 1$. The best classical algorithm for computing $a_n$ has complexity $\poly(p \log n)$ \cite{couveignes2011computational}, which is polynomial in $\log n$ but exponential in $\log p = O(\kappa)$, where $\kappa$ is the security parameter. Another possible approach is to use phase estimation to directly compute $a_n$, since we know that
\[ e^{iT_n} \ket{\phi} = e^{i a_n} \ket{\phi}. \]
This would also fail because of the following reason. It is known that $\abs{a_n} \le \sigma_0(n) \sqrt{n}$ \cite{deligne1974conjecture}, where $\sigma_0(n)$ is the number of positive divisors of $n$. However, the value $\abs{a_n}$ can be exponentially large in $\kappa$. This means we need to perform phase estimation with exponential accuracy. But we can only simulate $e^{iT_nt}$ for time $t$ that is bounded by $\poly(\kappa)$. 

To solve this problem, the idea is to use phase estimation but for a specific set of integers $n_i$ such that we can exploit the relations \eqref{equ:eigenform-coeff} between the $a_{n_i}$'s. More precisely, let $s = \lceil \log d \rceil$ and let $C = \{ \ell_1, \ell_2, \dots, \ell_s \}$ be a set $s$ distinct primes of size $\poly(\kappa)$, say the first $s$ primes $2, 3, 5, \dots, \ell_s$. Then we choose the $n_i$ to be the set
\begin{equation}
    \label{equ:n_i}
    \{ n_i \}_{1 \le i \le d} := \{ \ell_1^{b_1} \ell_2^{b_2} \cdots \ell_s^{b_s} : (b_1, \dots, b_s) \in \{ 0, 1 \}^s \}.
\end{equation}
Explicitly, if $j = b_1b_2 \cdots b_s$ is the binary representation of $j$ then we set $n_j = \ell_1^{b_1} \ell_2^{b_2} \cdots \ell_s^{b_s}$.

\begin{theorem}
    Given the quantum state $\ket{\phi}$ representing an eigenform $\phi$, let $f(q) = \sum_{n = 1}^\infty a_nq^n$ be the modular form, defined by \eqref{equ:eigenform-idty}, that corresponds to $\phi$. Let $\{ n_i \}_{1 \le i \le d}$ be the set of integers defined in \eqref{equ:n_i}, and let
    \[ \ket{\bm{a}} = \frac{1}{\sqrt{\sum_{j = 1}^d a_{n_j}^2}} \sum_{j = 1}^d a_{n_j} \ket{j}. \]
    For any constant $c > 0$, there is a polynomial time quantum algorithm that can prepare a state $\ket{\tilde{\bm{a}}}$ such that $\opnorm{\ket{\bm{a}}\bra{\bm{a}} - \ket{\tilde{\bm{a}}}\bra{\tilde{\bm{a}}}}_1 \le 1 / \kappa^c$.
\end{theorem}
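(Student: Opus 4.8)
The plan is to exploit the multiplicativity of the eigenvalues to factor the target state into a tensor product of single qubits. First I would note that, since each $n_j = \ell_1^{b_1}\cdots\ell_s^{b_s}$ is a product of \emph{distinct} primes, the first relation in \eqref{equ:eigenform-coeff} gives $a_{n_j} = \prod_{t=1}^s a_{\ell_t}^{b_t}$ (using $a_1 = 1$). Hence the unnormalized amplitude vector factors,
\[ \sum_j a_{n_j}\ket{j} = \bigotimes_{t=1}^s \bigl(\ket{0} + a_{\ell_t}\ket{1}\bigr), \]
and the squared norm splits as $\sum_j a_{n_j}^2 = \prod_{t=1}^s (1 + a_{\ell_t}^2)$, so that
\[ \ket{\bm{a}} = \bigotimes_{t=1}^s \frac{\ket{0} + a_{\ell_t}\ket{1}}{\sqrt{1 + a_{\ell_t}^2}}. \]
It therefore suffices to learn each single-prime eigenvalue $a_{\ell_t}$ to inverse-polynomial accuracy and then prepare the $s$ qubits independently.

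Crucially, this is exactly the regime where the obstruction noted above — that $\abs{a_n}$ can be exponentially large — does not arise. Since $\ell_t = \poly(\kappa)$, Deligne's bound gives $\abs{a_{\ell_t}} \le \sigma_0(\ell_t)\sqrt{\ell_t} = 2\sqrt{\ell_t} = \poly(\kappa)$. As $\ket{\phi}$ is a simultaneous eigenstate with $e^{iT_{\ell_t}}\ket{\phi} = e^{ia_{\ell_t}}\ket{\phi}$, and each Brandt matrix $T_{\ell_t}$ is sparse with polynomially bounded entries, I would use standard Hamiltonian simulation to implement $e^{iT_{\ell_t}t}$ for $t = \poly(\kappa)$ (after rescaling $T_{\ell_t}$ so its spectrum lies in $[-1,1]$) and run phase estimation to obtain estimates $\tilde{a}_{\ell_t}$ with $\abs{\tilde{a}_{\ell_t} - a_{\ell_t}} \le \epsilon$, for any prescribed $\epsilon = 1/\poly(\kappa)$, with overwhelming success probability. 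Because $\ket{\phi}$ is an \emph{exact} eigenstate, each phase estimation leaves it essentially undisturbed, so one copy of the money state suffices to extract all $s$ values sequentially.

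With the classical estimates in hand, the algorithm prepares $\ket{\tilde{\bm{a}}} = \bigotimes_t (\ket{0} + \tilde{a}_{\ell_t}\ket{1})/\sqrt{1 + \tilde{a}_{\ell_t}^2}$. For a single qubit the overlap of the true and approximate states equals $\cos(\arctan a_{\ell_t} - \arctan \tilde{a}_{\ell_t})$, so by the pure-state formula recorded in the preliminaries and the fact that $\arctan$ is $1$-Lipschitz, the per-qubit trace distance is at most $2\epsilon$. The tensor-product subadditivity of the trace distance then yields $\opnorm{\ket{\bm{a}}\bra{\bm{a}} - \ket{\tilde{\bm{a}}}\bra{\tilde{\bm{a}}}}_1 \le 2 s\epsilon$. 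As $s = \lceil \log d\rceil = O(\kappa)$, choosing $\epsilon = 1/(2s\kappa^c)$ — still inverse-polynomial, hence achievable — gives the claimed bound $1/\kappa^c$ once the negligible phase-estimation failure probability is absorbed.

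The genuinely new ingredient, and the step I expect to be the crux, is the smooth choice of indices in \eqref{equ:n_i}: it replaces the infeasible task of phase-estimating one exponentially large eigenvalue by $s$ estimations of polynomially bounded quantities, reassembled through multiplicativity; everything else is routine quantum-algorithmic bookkeeping. The one point requiring care is the index range: the tensor product naturally produces amplitudes over all $2^s$ bit strings (including the empty product at $j = 0$), whereas $\ket{\bm{a}}$ runs over $1 \le j \le d$. When $d = 2^s$ these agree, and otherwise one either pads $\cl(\O)$ to size $2^s$ or restricts and renormalizes, a bookkeeping step I would treat separately.
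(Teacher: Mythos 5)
Your proposal is correct and follows essentially the same route as the paper's proof: factor $\ket{\bm{a}}$ into a tensor product of single-qubit states via the multiplicativity relations \eqref{equ:eigenform-coeff}, estimate each $a_{\ell_t}$ by phase estimation on $e^{iT_{\ell_t}}$ applied to the eigenstate $\ket{\phi}$, prepare each approximate qubit by a single-qubit rotation, and conclude by subadditivity of the trace distance over tensor factors. Your treatment is in fact slightly more careful than the paper's on two minor points --- you track the factor of $2$ in the pure-state trace-distance formula (which the paper drops in its per-qubit bound) and you flag the $2^s$-versus-$d$ indexing issue --- but these do not constitute a different approach.
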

\begin{proof}
    For any prime $\ell \in C$ we can use phase estimation on the operator $e^{iT_\ell}$ and the eigenstate $\ket{\phi}$ to compute an approximate eigenvalue $\tilde{a}_\ell$ such that $\abs{a_\ell - \tilde{a}_\ell} \le 1 / s \kappa^c$. Now, using the single qubit operation
    \[ \tilde{R}_\ell = \frac{1}{\sqrt{1 + \smash[b]{\tilde{a}_\ell^2}}}
    \begin{bmatrix}
        1 & -\tilde{a}_\ell \\
        \tilde{a}_\ell & 1
    \end{bmatrix}
    \]
    we can generate the single qubit state
    \[ \ket{\tilde{\psi}_\ell} := \tilde{R}_\ell \ket{0} = \frac{1}{\sqrt{1 + \smash[b]{\tilde{a}_\ell^2}}} (\ket{0} + \tilde{a}_\ell \ket{1}). \]
    Tensoring these states together for all $\ell \in C$ we obtain the state $\ket{\tilde{\bm{a}}} := \bigotimes_{\ell \in C} \ket{\tilde{\psi}_\ell}$. Now, for any $n_j$ we can write $n_j = \ell_1^{b_1} \ell_2^{b_2} \cdots \ell_s^{b_s}$ where $b_1b_2 \cdots b_s$ is the binary representation of $j$. Therefore, using the relations \eqref{equ:eigenform-coeff} we have $a_{n_j} = a_{\ell_1}^{b_1} \cdots a_{\ell_s}^{b_s}$. So we can rewrite the state $\ket{\bm{a}}$ as 
    \begin{align*}
        \ket{\bm{a}}
        & = \bigg( \prod_{j = 1}^s (1 + a_\ell^2) \bigg)^{-1 / 2} \sum_{(b_1, \dots, b_s) \in \{ 0, 1 \}^s} a_{\ell_1}^{b_1} \cdots a_{\ell_s}^{b_s} \ket{b_1, \dots, b_s} \\
        & = \bigotimes_{\ell \in C} \frac{1}{\sqrt{1 + \smash[b]{a_\ell^2}}} (\ket{0} + a_\ell \ket{1})
    \end{align*}
    Let $\ket{\psi_\ell} = (\ket{0} + a_\ell \ket{1}) / \sqrt{1 + \smash[b]{a_\ell^2}}$. Then
    \[ \opnorm{\ket{\psi_\ell}\bra{\psi_\ell} - \ket{\tilde{\psi}_\ell}\bra{\tilde{\psi}_\ell}}_1 = \sqrt{1 - \abs{\braket{\psi_\ell}{\tilde{\psi}_\ell}}^2} \le \frac{1}{s\kappa^c}, \]
    and we have
    \begin{align*}
        \opnorm{\ket{\bm{a}}\bra{\bm{a}} - \ket{\tilde{\bm{a}}}\bra{\tilde{\bm{a}}}}_1
        & = \opnorm[\bigg]{\bigotimes_{\ell \in C}\ket{\psi_\ell}\bra{\psi_\ell} - \bigotimes_{\ell \in C}\ket{\tilde{\psi}_\ell}\bra{\tilde{\psi}_\ell}}_1 \\
        & \le \sum_{\ell \in C} \opnorm{\ket{\psi_\ell}\bra{\psi_\ell} - \ket{\tilde{\psi}_\ell}\bra{\tilde{\psi}_\ell}}_1 \\
        & \le \frac{1}{\kappa^c} \qedhere
    \end{align*}
\end{proof}

\newpage
\bibliographystyle{plain}
\bibliography{references}

\end{document}